\newtheorem{theorem}{Theorem}
\newtheorem{remark}{Remark}
\newtheorem{definition}{Definition}
\newtheorem{example}{Example}
\newtheorem{proof}{Proof}
\newcommand{\qed}{\hfill \ensuremath{\Box}}
\def\R{\mathbb{R}}
\begin{document}

\title{{Trustworthy Distributed Average Consensus based on Locally Assessed Trust Evaluations}}
\author{Christoforos N. Hadjicostis,~\IEEEmembership{Fellow,~IEEE} and Alejandro D. Dom\'{i}nguez-Garc\'{i}a,~\IEEEmembership{Fellow,~IEEE}
\thanks{C. N. Hadjicostis is with the ECE Department at the University of Cyprus, Nicosia, Cyprus, and also with the ECE Department at  the University of Illinois at Urbana-Champaign, Urbana, IL 61801, USA. E-mail:  hadjicostis.christoforos@UCY.AC.CY.}
\thanks{A. D. Dom\'{i}nguez-Garc\'{i}a is with the ECE Department at the University of Illinois at Urbana-Champaign, Urbana, IL 61801, USA. E-mail:  aledan@ILLINOIS.EDU.}
}
\maketitle
\begin{abstract}
This paper proposes a distributed algorithm for average consensus in a multi-agent system under a fixed bidirectional communication topology, in the presence of malicious agents (nodes) that may try to influence the average consensus outcome by manipulating their updates. The proposed algorithm converges asymptotically to the average of the initial values of the non-malicious nodes, which we refer to as the {\em trustworthy average}, as long as the underlying topology that describes the information exchange among the non-malicious nodes is connected. We first present a distributed iterative algorithm that assumes that each node receives (at each iteration or periodically) side information about the trustworthiness of the other nodes, and it uses such trust assessments to determine whether or not to incorporate messages received from its neighbors, as well as to make proper adjustments in its calculation depending on whether a previously trustworthy neighbor becomes untrustworthy or vice-versa.
We show that, as long as the trust assessments for each non-malicious node eventually reflect correctly the status (malicious or non-malicious) of its neighboring nodes, the algorithm guarantees asymptotic convergence to the trustworthy average. We subsequently discuss how the proposed algorithm can be enhanced with functionality that enables each node to obtain trust assessments about its neighbors by utilizing information that it receives from its two-hop neighbors at infrequent, perhaps randomly chosen, time instants.

\noindent
{\bf Keywords:} {Distributed averaging, multi-agent systems, fault-tolerant consensus, resilience, trustworthy computation, distributed trust assessment.}
\end{abstract}

\section{Introduction and Motivation}

Average consensus and, more generally, consensus and distributed function calculation have received attention by many communities, including the control community (which has considered applications in multi-agent systems, formation control, and sensor networks), the communication community, and the computer science community \cite{1984:Tsitsiklis,1996:Lynch,2003:Koetter,2004:Murray,2005:Hromkovic,2008:Cortes}. In particular, average consensus has been studied extensively, primarily in settings where convergence is asymptotic and each node processes and transmits real-valued states with infinite precision \cite{2004:Murray, 2003:jadbabaie_coordination, cao2008reaching, 2009:Olshevsky}; however, issues of finite time completion \cite{sundaram2007finite, wang2010finite, yuan2013decentralised, hendrickx2014finite}, quantized transmissions \cite{kashyap2007quantized, aysal2008distributed, 2009:Nedic, lavaei2011quantized, cai2011quantized}, and event-triggered operation \cite{seyboth2013event, nowzari2019event, rikos2020event} have also been considered. Reference \cite{2018:BOOK} discusses several applications of distributed average consensus.

This paper addresses asymptotic average consensus in the presence of malicious nodes, which try to influence the outcome of the distributed computation by arbitrarily manipulating their initial values and/or their updates, possibly in a colluding manner. This is a topic that has recently received some attention as described later in this section. In this paper, we propose and analyze a novel distributed algorithm, which enables the non-malicious nodes of a distributed system to distributively identify and isolate malicious nodes, and to eventually calculate the \textit{exact} average of their initial values, despite the actions of malicious nodes. The communication topology is assumed to be bidirectional and we require that (i) the induced topology when restricting attention to non-malicious nodes is connected, and (ii) non-malicious nodes have access, perhaps periodically, to certain information provided by their two-hop neighbors (i.e., the neighbors of their neighbors).
 
The proposed scheme essentially takes a rather standard distributed algorithm for average consensus in bidirectional communication topologies (that relies on linear updates with weights that form a doubly stochastic matrix---see, e.g., \cite{2018:BOOK}) and enhances it in two ways. First, by having each node maintain one additional {\em running-sum} variable for each of its neighbors, we devise a scheme that allows each node to virtually remove from (or add to) the distributed computation neighboring nodes that become untrustworthy (or trustworthy). This is done in a way that ensures that, if all non-malicious nodes eventually learn the trustworthiness of all of their neighbors, then they will asymptotically converge to the {\em trustworthy average} (i.e., the average of the initial values of the non-malicious nodes), despite the actions (e.g., erroneous computational updates) and ignoring the initial values of the malicious nodes. Second, by establishing an invariant that holds during the execution of the proposed algorithm, we devise a scheme that allows each node to determine whether its neighbors are trustworthy or not. The checking is done in a distributed manner assuming each node has infrequent access to information sent by its two-hop neighbors (i.e., by the neighbors of its neighbors).

The proposed algorithm (see Algorithm~\ref{algorithm_0}) removes/adds untrustworthy/trustworthy nodes utilizing trust evaluations that become available at each iteration. It possesses some of the features of the variation of running-sum ratio consensus algorithm we proposed in \cite{CDCpaper}, which can be used by non-malicious nodes in fixed, possibly directed communication topologies, to asymptotically converge to the {\em exact} value of the trustworthy average, as long as the induced graph obtained by focusing on non-malicious nodes is strongly connected, and the non-malicious nodes eventually learn which nodes among their in-neighbors and out-neighbors are malicious. However, the work in \cite{CDCpaper} did not identify an invariant and did not discuss how the non-malicious nodes can exploit it to obtain the needed trust assessments about each neighbor.

It is worth pointing out that the idea of adding and removing nodes from a distributed average consensus computation appears in works that rely on the running-sum ratio consensus algorithm (e.g., \cite{nitinTAC2016,eyal2014limosense}) or in works that deal with dynamic average consensus (e.g., \cite{zhu2010discrete, montijano2014robust}). Unlike the work in this paper, however, the aforementioned works deal with nodes that willingly remove themselves (or collaborate with their neighboring nodes in order to remove themselves) from the computation of the average. Instead, the solution proposed in this paper (as also the solution in \cite{CDCpaper}), ensures that all influence that a malicious node had on the distributed computation (including past influence) is nullified by its non-malicious neighbors. 

Related ideas about utilizing trust assessments towards trustworthy average consensus appear in \cite{yemini2021characterizing}, which considers a distributed system where stochastic values of trust between the nodes are available, and nodes use these trust values to reach agreement to a common limit value. The authors of \cite{yemini2021characterizing} show that, under certain conditions on the trust values, the deviation of the common limit value that is reached from the true consensus value is bounded in a manner that can be characterized. Moreover, correct classification of malicious and non-malicious nodes can be attained in finite time almost surely. Unlike the setting in \cite{yemini2021characterizing}, the setting in this paper is deterministic and the solution we propose is based on a completely different algorithm that guarantees convergence to the exact average of the trustworthy nodes.

The proposed scheme for obtaining trust assessments can be embedded in the proposed distributed average consensus algorithm and exploits information from two-hop neighbors. It does not require a mechanism to inform all nodes about which specific node has been identified as malicious (because trust assessments are only needed for neighboring nodes, and each node has a direct way of obtaining the trust assessments it needs). Unlike \cite{yuan2019resilient, yuan2021resilient, yuan2021secure} (which also exploit information from two-hop neighbors), the proposed scheme for obtaining trust assessments can be performed at each node by requiring information from two-hop neighbors, infrequently or even at randomly selected points of time, which significantly reduces the communication overhead; this is achieved by exploiting an invariant that holds during the execution of the proposed algorithm, as shown in this paper. 

Related ideas about stochastic side information that can serve as trust assessments also appear in \cite{agkun2022}, but the focus in that paper is about how the different nodes can learn whether to trust the other agents. The assumption is that nodes can directly evaluate whether to trust their in-neighbors and the focus is on how to propagate trust assessments from other nodes to their out-neighbors and eventually, via a consensus mechanism, to the whole network.

In addition to \cite{CDCpaper, yuan2019resilient,yuan2021resilient, yuan2021secure, agkun2022} discussed above, there is some related work on distributed average consensus algorithms that aim at removing/limiting the effect of malicious nodes on the computation. For example, the authors of \cite{sundaram2011distributed} exploit the connectivity of the underlying topology in order to detect and isolate the effects of malicious nodes when performing distributed function calculation. More specifically, if the underlying graph is $(2f+1)$-connected (i.e., any two non-neighboring nodes have at least $2f+1$ node-disjoint paths that connect them), then one can systematically exploit information that arrives from these disjoint paths in order to withstand up to $f$ malicious nodes. Alternatively, one could use the approach proposed in \cite{leblanc2012consensus} to limit the effect of malicious nodes on the computation; however, such approach would only guarantee reaching consensus to a value between the minimum and maximum value of non-malicious nodes (not necessarily the average).

The remainder of the paper is organized as follows. In Section~\ref{SECpreliminaries}, we introduce some preliminary concepts from graph theory and distributed averaging over bidirectional communication topologies. In Section~\ref{SECtrustalgo}, we formulate the trustworthy distributed average consensus problem and outline an algorithm, including its pseudocode description, to solve it. In this section, we also establish an invariant that holds during the execution of the algorithm and which is used to prove the algorithm's correctness. In Section~\ref{SECexamples}, we provide simulation studies to illustrate the operation of the algorithm, under different scenarios for the convergence of the trust assessments and the behavior of the malicious nodes. Section~\ref{SECtrust} describes the scheme for distributively obtaining trust assessments while executing the proposed distributed algorithm. Finally, Section~\ref{SECconclusions} concludes with some directions for future research.

\section{Mathematical Background and Notation}
\label{SECpreliminaries}

In this section we provide some needed background on graph theory and review some existing distributed protocols for reaching average consensus in multi-agent systems,

\subsection{Graph-Theoretic Notions and Communication Topology}

A directed graph (digraph) of order $N$ ($N \geq 2$), is defined as $\mathcal{G} = (\mathcal{V}, \mathcal{E})$, where $\mathcal{V} =  \{v_1, v_2, \dots, v_N\}$ is the set of vertices (nodes) and $\mathcal{E} \subseteq \mathcal{V} \times \mathcal{V} - \{ (v_j,v_j)$ $|$ $v_j \in \mathcal{V} \}$ is the set of links (edges). A directed edge from node $v_i$ to node $v_j$ is denoted by $(v_j, v_i) \in \mathcal{E}$, and indicates that node $v_i$ can send information to node $v_j$. A digraph is called \textit{strongly connected} if for each pair of nodes $v_j, v_i \in \mathcal{V}$, $v_j \neq v_i$, there exists a directed \textit{path} from $v_i$ to $v_j$, i.e., we can find a sequence of nodes $v_i =: v_{l_0},v_{l_1}, \dots, v_{l_t} := v_j$ such that $(v_{l_{\tau+1}},v_{l_{\tau}}) \in \mathcal{E}$ for $ \tau = 0, 1, \dots , t-1$. All nodes that can send information to node $v_j$ directly are said to be its in-neighbors and belong to the set $\mathcal{N}_j^- = \{v_i \in \mathcal{V} \; | \; (v_j, v_i) \in \mathcal{E} \}$, the cardinality of which is referred to as the \textit{in-degree} of $v_j$ and is denoted by $D_j^-$. The nodes that can receive information from node $v_j$ are said to be its out-neighbors and belong to the set $\mathcal{N}_j^+ = \{v_l \in \mathcal{V} \; | \; (v_l, v_j) \in \mathcal{E} \}$, the cardinality of which is referred to as the \textit{out-degree} of $v_j$ and is denoted by $D_j^+$.

In this paper, we consider multi-agent systems in which the exchange of information between a pair of nodes, if allowed, is bidirectional. Then, the communication topology of the multi-agent system can be described by a bidirectional communication graph, which we define as follows.

\begin{definition}
A digraph $\mathcal{G}=(\mathcal{V}, \mathcal{E})$ is called a bidirectional communication graph if $(v_j, v_i) \in \mathcal{E}$ implies that $(v_i, v_j) \in \mathcal{E}$.
\end{definition}

Under a bidirectional communication graph $\mathcal{G}$, all nodes that can send/receive information to/from node $v_j$ directly are said to be its neighbors and belong to the set $\mathcal{N}_j = \{v_i \in \mathcal{V} \; | \; (v_j, v_i) \in \mathcal{E} \} = \{v_l \in \mathcal{V} \; | \; (v_l, v_j) \in \mathcal{E} \}$, which satisfies $\mathcal{N}_j = \mathcal{N}_j^+ = \mathcal{N}_j^-$. The cardinality of $\mathcal{N}_j$ is referred to as the \textit{degree} of $v_j$ and is denoted by $D_j = | \mathcal{N}_j |$. A bidirectional communication graph is said to be {\em connected} if, for each pair of nodes $v_j, v_i \in \mathcal{V}$, $v_j \neq v_i$, there exists a \textit{path} from $v_i$ to $v_j$ i.e., we can find a sequence of nodes $v_i =: v_{l_0},v_{l_1}, \dots, v_{l_t} := v_j$ such that $(v_{l_{\tau+1}},v_{l_{\tau}}) \in \mathcal{E}$ (thus, also $(v_{l_{\tau}},v_{l_{\tau+1}}) \in \mathcal{E}$) for $ \tau = 0, 1, \dots , t-1$.

In this paper, we assume a broadcast model under a fixed bidirectional communication graph $\mathcal{G}$. Specifically, when node $v_j$ broadcasts information, its transmissions are received at all of its neighbors in the set $\mathcal{N}_j$; similarly, node $v_j$ receives all transmissions sent by each of its neighbors in the set $\mathcal{N}_j$. Note, however, that node $v_j$ may choose to ignore a transmission from a certain neighbor $v_i$ (e.g., because it considers $v_i$ to be untrustworthy); such actions by transmitting/receiving nodes may result in a virtual communication topology that is not necessarily a bidirectional communication graph.

\noindent
{\bf Assumption 0.} Each transmission by node $v_j \in \mathcal{V}$ is received by {\em all} neighbors of node $v_j$ (i.e., all nodes in the set $\mathcal{N}_j$). Furthermore, we assume that each transmission is associated with a unique node ID that allows receiving nodes to identify the sending node.

\subsection{Average Consensus via Linear Iterations}
\label{SUBSECrunning}

Consider a distributed system, captured by a bidirectional communication graph $\mathcal{G} = (\mathcal{V}, \mathcal{E})$, in which each node $v_j \in \mathcal{V}$ has a value $x_j$. Average consensus aims to have all the nodes calculate $\overline{X} = \frac{\sum_{\ell=1}^N x_\ell}{N}$ in a distributed manner. This can be achieved via a linear iteration where each node $v_j$ maintains a scalar state variable $x_j[k]$, which it updates based on the values received from its neighbors. Specifically, each node $v_j$ uses a linear time-invariant update of the form 
\begin{equation}
x_j[k+1]=w_{jj} x_j[k]+ \sum_{v_i \in \mathcal{N}_j} w_{ji} x_i[k] \; , \label{EQgeneral}
\end{equation}
where $x_j[0]=x_j$, $v_j \in \mathcal{V}$, and the $w_{ji}$'s are constant weights. If we let $x[k]=[x_1[k], x_2[k], \dots, x_N[k]]^{\mathrm{T}}$, then the iteration in \eqref{EQgeneral} can be written compactly in matrix form as
\begin{equation}
x[k+1] =W x[k] \; , \;\; x[0] =[ x_1, x_2, \dots, x_N]^{\mathrm{T}}, \label{EQgeneral_matrix}
\end{equation}
where $W = [ w_{ji}] \in \R^{N \times N}$ is referred to as the weight matrix, with the entry $w_{ji}$ at its $j$th row and $i$th column such that $w_{ji} =0$ if $v_i \notin \mathcal{N}_j \cup \{ v_j \}$. The nodes are said to reach asymptotic average consensus if
\begin{equation}
\lim_{k \rightarrow \infty} x_j[k] = \overline{X} \; , \quad \forall v_j \in \mathcal{V} \; . \label{EQasymptoticconsensus}
\end{equation}

The necessary and sufficient conditions for the iteration in \eqref{EQgeneral_matrix} to asymptotically reach average consensus are \cite{2004:XiaoBoyd, 2004:Murray}: (i) $W$ has a simple eigenvalue at $1$, with left eigenvector $1^T_N$ and right eigenvector $1_N$ (where $1_N$ denotes the $N$-dimensional all-ones column vector), and (ii) all other eigenvalues of $W$ have magnitude strictly less than $1$. If one focuses on nonnegative weights, these conditions are equivalent to $W$ being a primitive doubly stochastic matrix. In the case of a bidirectional communication graph, there are very simple ways for the nodes to choose the weights, in a distributed manner, so that $W$ forms a doubly stochastic matrix (see, e.g., \cite{2018:BOOK}). For example, assuming the nodes know the total number of nodes $N$ or an upper bound $N' \geq N$, each node $v_j$ can set the weights on all of its incoming links to be $w_{ji} = \frac{1}{N'}$ for all $v_i \in \mathcal{N}_j$ and $w_{jj} = 1-\frac{D_j}{N'}$ (zero otherwise). It is easy to verify that $W$ will be a (symmetric) doubly stochastic matrix. Furthermore, $W$ will be primitive as long as $\mathcal{G}$ is connected.

The linear iteration in \eqref{EQgeneral} can also be extended to a time-varying topology setting as follows. Consider that at each iteration $k$, the topology is captured by a bidirectional communication graph $\mathcal{G}[k] = (\mathcal{V}, \mathcal{E}[k])$, where the set of nodes remains fixed but the set of edges $\mathcal{E}[k]$ can vary for different values of $k$. We can now consider the iteration
\begin{equation}
x_j[k+1]= w_{jj}[k] x_j[k] + \sum_{v_i \in \mathcal{N}_j[k]} w_{ji}[k] x_i[k] \; , \label{EQgeneraltime}
\end{equation}
with $x_j[0]=x_j$, $v_j \in \mathcal{V}$, and time-varying weights $w_{ji}[k]$, where $\mathcal{N}_j[k]$ is the set of neighbors of node $v_j$ in $\mathcal{G}[k]$. We can easily choose the time-varying weights to form a matrix $W[k]=[w_{ji}[k]]$ that is doubly stochastic and conforms to the topology captured by $\mathcal{G}[k]$, such that $w_{ji}[k] \geq c$ for $v_i \in \mathcal{N}_j[k] \cup \{ v_j \}$, where $c$ is some positive constant. For example, if we let $D_j[k] = | \mathcal{N}_j[k] |$ denote the number of neighbors of node $v_j$ at iteration $k$, we can have each node $v_j$ set the weights on all of its incoming links to be $w_{ji}[k] = \frac{1}{N'}$ for all $v_i \in \mathcal{N}_j[k]$ and $w_{jj}[k] = 1-\frac{D_j[k]}{N'}$ (zero otherwise); this results in a weight matrix $W[k]$ that is symmetric and doubly stochastic, but not necessarily primitive (that will depend on whether or not $\mathcal{G}[k]$ is connected). We can write \eqref{EQgeneraltime} in matrix form as
\begin{equation}
x[k+1] =W[k] x[k] \; , \;\; x[0] =[ x_1, x_2, \dots, x_N]^{\mathrm{T}}, \label{EQgeneral_matrix_time}
\end{equation}
and one can show that average consensus is reached under some mild joint connectivity conditions on the graphs $\mathcal{G}[k]$, $k=0, 1, 2, ...$. For instance, it can be shown (see, e.g., \cite{2018:BOOK}) that asymptotic average consensus in \eqref{EQasymptoticconsensus} is reached if we can find a finite $K$ such that each union graph
$$
\begin{array}{l}
\mathcal{G}[\tau K] \cup \mathcal{G}[\tau K+1] \cup \ldots \cup \mathcal{G}[\tau K+K-1] 
\\
\;\;\; := (\mathcal{V}, 
\mathcal{E}[\tau K] \cup \mathcal{E}[\tau K+1] \cup \ldots \cup \mathcal{E}[\tau K+K-1])
\end{array}
$$
for $\tau=0,1,2,\dots$, is connected. 

Note that when implementing the distributed algorithm in \eqref{EQgeneral_matrix} (or, more generally, in \eqref{EQgeneral_matrix_time}), each node $v_j$ simply needs to broadcast its value $x_j[k]$ at iteration $k$; at the same time, node $v_j$ receives the values $\{ x_i[k] \; | \; v_i \in \mathcal{N}_j \}$ (or, more generally, $\{ x_i[k] \; | \; v_i \in \mathcal{N}_j[k] \}$). For notational simplicity, in our development in the remainder of this paper, we make the following assumption:\footnote{Please note that (with some minor adjustments and at the expense of heavier notation) any set of weights that form a doubly stochastic matrix $W$ (or $W[k]$ at each iteration $k$) can be used as long as they conform to the communication topology $\mathcal{G}$ (or $\mathcal{G}[k]$) and $w_{ji}[k] \geq c$ for $v_i \in \mathcal{N}_j[k] \cup \{ v_j \}$, where $c$ is some positive constant.}

\noindent
{\bf Assumption 1.} For each node $v_j \in \mathcal{V}$, the nonzero weights $w_{ji}$ for $v_i \in \mathcal{N}_j$ when implementing iteration \eqref{EQgeneral_matrix} (or $w_{ji}[k]$ for $v_i \in \mathcal{N}_j[k]$ when implementing iteration \eqref{EQgeneral_matrix_time}) are equal to $1/N$ and $w_{jj}$ (or $w_{jj}[k]$) is equal to $1-D_j/N$ (or $1-D_j[k]/N$). 

In our developments later in the paper, we will find it necessary to have the nodes execute a variation of the distributed averaging time-varying iteration of \eqref{EQgeneraltime}, where each node $v_j$ maintains a running-sum variable \cite{nitinTAC2016}, defined as
\begin{equation}
\sigma_j[k+1] := \sum_{t=0}^k x_j[t] \; , \label{EQrunningsum}
\end{equation}
and transmits, at iteration $k$, the running sum $\sigma_j[k+1]$ instead of $x_j[k]$. Then, the iteration in \eqref{EQgeneraltime} (with the weights in Assumption~1) can be executed by each node as follows:
\begin{equation}
x_j[k+1]= \left ( 1-\frac{D_j[k]}{N} \right ) x_j[k] + \frac{1}{N} \sum_{v_i \in \mathcal{N}_j[k]} (\rho_{ji}[k+1] - \rho_{ji}[k]) \label{EQgeneral_matrix_time_node}
\end{equation}
where $\rho_{ji}[k+1] = \sigma_i[k+1]$ for $v_i \in \mathcal{N}_j$. Note that in order to implement this running-sum based version of the iteration, each node $v_j$ maintains one variable for its running sum $\sigma_j$ (which it can easily update as $\sigma_j[k+1] = \sigma_j[k]+x_j[k]$) and also $D_j$ additional variables, namely $\{ \rho_{ji}[k] \; | \; v_i \in \mathcal{N}_j \}$, to remember the previous value of the running sum of each neighbor.

\section{Trustworthy Distributed Averaging}
\label{SECtrustalgo}

\subsection{Problem Formulation}

We are given a bidirectional communication graph $\mathcal{G} = (\mathcal{V}, \mathcal{E})$, which describes the (fixed) topology among a set of nodes in a distributed system. We assume a broadcast model as described in Assumption~0. Each node $v_j \in \mathcal{V}$ has a value $x_j$. A certain subset $\mathcal{V}_T$, $\mathcal{V}_T \subseteq \mathcal{V}$, of the nodes is trustworthy (non-malicious), whereas the remaining nodes in $\mathcal{V}_M = \mathcal{V} \setminus \mathcal{V}_T$ are malicious (untrustworthy). Malicious nodes can collude to behave unpredictably during the execution of the algorithm. The goal of the trustworthy nodes is to compute the {\em average of the trustworthy nodes}, defined by 
\begin{equation}
\label{EQtrustaverage}
\overline{X}_T = \frac{\sum_{v_l \in \mathcal{V}_T} x_l}{|\mathcal{V}_T|} \; ,
\end{equation}
despite any incorrect updates by the malicious nodes. We make the following assumption.

\noindent
{\bf Assumption 2.} The bidirectional communication graph induced from $\mathcal{G}$ by restricting attention to the trustworthy nodes, denoted by $\mathcal{G}_{T} = (\mathcal{V}_T, \mathcal{E}_T)$, where $\mathcal{E}_T = \{ (v_j, v_i) \in \mathcal{E} \; | \; v_j, v_i  \in \mathcal{V}_T \}$, is connected.

\subsection{Trust Assessment Model}

At each iteration $k$, each node $v_j \in \mathcal{V}_T$ has access to an assessment about the trustworthiness of each other node.\footnote{It will become obvious in our development that node $v_j$ only needs information about the trustworthiness of its neighbors, and not necessarily all other nodes; however, for ease of notation, we assume that such information is made available at node $v_j$ for all other nodes.} This assessment could be derived based on measurements of some sort, such as the {\em stochastic values of trust}, $a_{ij} \in (0, 1)$, used in \cite{yemini2021characterizing, gil2017guaranteeing}, which approach $1$ when node $v_i$ should be trusted by node $v_j$, and approach $0$ when node $v_i$ should not be trusted by node $v_j$; or it could be based on checks like the ones in \cite{yuan2019resilient, yuan2021resilient, yuan2021secure}, where, using two-hop communication in bidirectional communication graphs, neighbors of node $v_i$ assess the computations performed by node $v_i$ in order to determine their correctness. 

For now, we abstract away from the specific mechanism of measuring and assessing trust (this issue is addressed in Section~\ref{SECtrust}), and assume that, at each iteration $k$, each node $v_j$ has access to an assessment about the trustworthiness of another node $v_i$. In particular, $t_{ij}[k] \in \{ 0, 1 \}$ is a binary indicator that captures the trustworthiness of node $v_i$ as perceived by node $v_j$ at iteration $k$: $t_{ij}[k]=1$ ($t_{ij}[k]=0$) indicates that node $v_j$ considers node $v_i$ to be trustworthy (malicious) at iteration $k$. We use $\mathcal{T}_j[k] = \{ v_i \; | \; t_{ij}[k]=1 \}$ to denote the set of nodes that are considered trustworthy by node $v_j$ at iteration $k$; we assume that $t_{jj}[k]=1$ and thus $v_j \in \mathcal{T}_j[k]$ for all $k$ (i.e., node $v_j$ always trusts itself). We also use $\mathcal{M}_j[k] = \mathcal{V} \setminus \mathcal{T}_j[k]$ to denote the nodes that are considered malicious by node $v_j$ at iteration~$k$.

Without loss of generality, we assume that initially $\mathcal{T}_j[0] = \mathcal{V}$ for each node $v_j$ (i.e., at the start of the algorithm execution, each node $v_j$ considers all of its neighbors to be trustworthy). We also require that asymptotically $\lim_{k \rightarrow \infty} \mathcal{T}_j[k] = \mathcal{V}_T$, at least for nodes $v_j$ that are trustworthy. Note that convergence of $\mathcal{T}_j[k]$ to $\mathcal{V}_T$ as $k$ goes to infinity does not have to be monotonic as $t_{ij}[k]$ may fluctuate between $1$ and $0$; however, $t_{ij}[k]$ has to eventually settle to $0$ if $v_i$ is malicious or $1$ if $v_i$ is trustworthy. Also, note that at any given $k$, $t_{ij}[k]$ and $t_{ji}[k]$ need not coincide, e.g., node $v_j$ may trust node $v_i$ but not vice-versa. Finally, two different trustworthy nodes, $v_j$ and $v_{l}$ ($v_j, v_l \in \mathcal{V}_T$), may have different trust assessments about node $v_i$ at a given iteration $k$ (i.e., $t_{ij}[k] \neq t_{il}[k]$); however, we require that eventually these assessments would have to be equal and correctly reflect the status of node $v_i$ (i.e., for large $k$, $t_{ij}[k] = t_{il}[k] = 1$ if node $v_i$ is trustworthy and $t_{ij}[k] = t_{il}[k] = 0$ otherwise). This assumption is stated below.

\noindent
{\bf Assumption 3.} The trust assessments $t_{ij}[k]$, $v_i, v_j \in \mathcal{V}$, are such that for each $v_j \in \mathcal{V}_T$, there exists a finite $k_j$ such that 
$$
\mathcal{T}_j[k] := \{ v_i \; | \; t_{ij}[k]=1 \} = \mathcal{V}_T \; , \text{ for } k \geq k_j \; .
$$
Clearly, for $k \geq k_{\max} : = \max_{v_j \in \mathcal{V}_T} \{ k_j \}$, we have $\mathcal{T}_j[k] = \mathcal{V}_T$ for all $v_j \in \mathcal{V}_T$.

\subsection{Trust Assessment-Based Average Consensus}

The trustworthy distributed calculation of the average $\overline{X}_T$ in \eqref{EQtrustaverage} is based on a variation of the linear iteration in \eqref{EQgeneral_matrix_time_node}. The basic idea is for each node $v_j$ to carefully track its trustworthy neighbors at iteration $k$, i.e., the nodes in the set $\mathcal{N}_j[k] = \mathcal{N}_j \cap \mathcal{T}_j[k]$, and to isolate neighbors that it does not consider trustworthy. This is done in two ways: (i) node $v_j$ ignores any values it receives at iteration $k$ from neighbors outside the set $\mathcal{N}_j[k]$, and (ii) node $v_j$ considers that its degree at iteration $k$ is $| \mathcal{N}_j[k] | = D_j[k]$. Effectively, node $v_j$ updates its value as 
\begin{eqnarray}
x_j[k+1] & = & \!\!\! \left ( 1 - \frac{D_j[k]}{N} \right) x_j[k] + \nonumber \\
 & & \!\!\!\!\!\!\! + \frac{1}{N} \!\!\!\! \sum_{v_i \in \mathcal{N}_j[k]} (\rho_{ji}[k+1] - \rho_{ji}[k]) + \varepsilon_j[k], \label{EQbasicadjustment} 
\end{eqnarray}
where $\varepsilon_j[k]$ is zero except when there are changes in the set $\mathcal{T}_j$. More specifically, if $\mathcal{T}_j[k] \neq \mathcal{T}_j[k-1]$ (i.e., some nodes that were previously considered trustworthy by node $v_j$ are now considered untrustworthy, and/or vice-versa), then node $v_j$ needs to make additional adjustments (only at iteration $k$) via a nonzero $\varepsilon_j[k]$. These adjustments are discussed in more detail later in this section.

With the above modification, it should be clear that the malicious nodes are excluded from the distributed computation, i.e., the trustworthy nodes ignore any transmissions they receive from them and do not consider them when computing their degree. Thus, if $\mathcal{T}_j[k] = \mathcal{V}_T$ for all $k$ (starting from $k=0$), then the above modified version of the average consensus algorithm effectively operates on the bidirectional communication graph $\mathcal{G}_{T} = (\mathcal{V}_T, \mathcal{E}_T)$ induced from $\mathcal{G}$ by restricting attention to the trustworthy nodes, which is assumed to be connected (Assumption~2). Therefore, the algorithm will converge to the average of the initial values of the trustworthy nodes comprising the graph $\mathcal{G}_{T}$, namely $\overline{X}_T$ given in \eqref{EQtrustaverage}. As we will see, this is effectively what iteration \eqref{EQbasicadjustment} will do; after $\mathcal{T}_j[k] = \mathcal{V}_T$ for $k \geq k_{\max}$ (see Assumption~3), trustworthy nodes are executing a distributed algorithm over $\mathcal{G}_T$.

The main challenge is that trust assessments may not be correct at the initialization of the algorithm and may also fluctuate during the operation of the algorithm. What is certain, thanks to Assumption~3, is that the trustworthy nodes will eventually exclude the malicious nodes from the distributed computation (and effectively operate on the induced graph $\mathcal{G}_{T}$ mentioned above). However, one needs to ensure that at the point this happens, the values maintained at the trustworthy nodes guarantee convergence to the trustworthy average. This is what we discuss next.

\subsection{Adding and Removing Malicious Nodes}

In order to perform the calculation of the trustworthy average, $\overline{X}_T$, trustworthy nodes need a way to add or remove the effects of trustworthy or malicious nodes on the distributed computation, using the available trust assessments at each iteration. This is mainly done by adjusting their degree and ignoring neighbors that are considered malicious (captured in the first two terms in the right hand side of \eqref{EQbasicadjustment}). In addition, each node needs to take specific actions when its perception about a neighboring node changes from malicious to trustworthy (or vice-versa), in order to ensure that the effect of that particular neighboring node is incorporated to (or removed from) the distributed computation; this is accomplished by having each (trustworthy) node $v_j$ properly adjust $\varepsilon_j[k]$ in \eqref{EQbasicadjustment}. There are two cases to consider, namely the case when a previously thought trustworthy neighbor $v_i$ of node $v_j$ becomes untrustworthy ($t_{ij}[k]=0$ and $t_{ij}[k-1] = 1$) and vice-versa.

\begin{itemize}

\item Case 1: Previously thought trustworthy neighbor $v_i$ becomes untrustworthy ($t_{ij}[k]=0$ and $t_{ij}[k-1] = 1$). In this case, apart from decreasing its degree by one ($D_j[k] = D_j[k-1]-1$) and ignoring node $v_i$'s transmissions, node $v_j$ needs to compensate for two quantities: 
\\ (i)  Node $v_j$ has to remove the effect of all previous transmissions that it received from node $v_i$. This can be done by subtracting from its $x_j$ value the amount $\frac{1}{N} \sigma_i[k]$ (note that $\sigma_i[k]$ is the running sum of node $v_i$, which node $v_j$ stores as $\rho_{ji}[k]$).
\\ (ii) Node $v_j$ needs to compensate for all transmissions that it has sent to node $v_i$ while it was considered trustworthy. It does so by adding to its $x_j$ value the amount $\frac{1}{N} \sigma_j[k]$. The reason is that $\sigma_j[k]$ represents the cumulative $x_j$ values that were sent to the now untrustworthy neighbor $v_i$ (or any neighbor for that matter). This adjustment essentially ``diverts" all messages sent to node $v_i$ from node $v_j$ up to (and including) time instant $k-1$, back to node $v_j$.
\\ The above two adjustments that node $v_j$ needs to make are as follows:
$$
\varepsilon_{ji}^-[k] = \frac{1}{N} \sigma_j[k] - \frac{1}{N} \rho_{ji}[k] \; .
$$

Note that if, at iteration $k$, node $v_j$ changes its perception about multiple neighboring nodes that were previously considered trustworthy, the above adjustments have to be performed for each such neighbor. In other words, if we let $\Delta \mathcal{U}_j[k] = \mathcal{N}_j \cap (\mathcal{T}_j[k-1] \setminus \mathcal{T}_j[k])$ be the set of previously trustworthy neighbors of node $v_j$ that become untrustworthy at iteration $k$, then, the total adjustment that node $v_j$ needs to make is as follows:
$$
\begin{array}{rcl}
\varepsilon_j^-[k] & = & \sum_{v_i \in \Delta \mathcal{U}_j[k]} \varepsilon_{ji}^-[k] \\ \\
 & = & | \Delta \mathcal{U}_{j}[k] | \frac{1}{N} \sigma_j[k] - \frac{1}{N} \sum_{v_i \in \Delta \mathcal{U}_j[k]} \rho_{ji}[k] \; .
\end{array}
$$

\item Case 2: Previously thought untrustworthy neighbor $v_i$ becomes trustworthy ($t_{ij}[k]=1$ and $t_{ij}[k-1] = 0$). In this case, apart from increasing its degree by one ($D_j[k] = D_j[k-1]+1$) and incorporating node $v_i$'s transmissions, node $v_j$ needs to compensate for two quantities: 
\\ (i)  Node $v_j$ has to add the effect of all previous transmissions that it has been ignoring from node $v_i$. This can be done by adding to $x_j$ the value $\frac{1}{N} \sigma_i[k]$ (note that $\sigma_i[k]$ is the running sum of node $v_i$, which node $v_j$ stores as $\rho_{ji}[k]$).
\\ (ii) Node $v_j$ needs to compensate for all transmissions that node $v_i$ was receiving from node $v_j$ while it was considered untrustworthy. Note that node $v_i$ was receiving the running sums of node $v_j$ without knowing that these transmissions were not intended for it. Thus, now that $v_i$ is back into the computation as a trustworthy node, node $v_j$ needs to adjust for this by subtracting from its $x_j$ value the amount $\frac{1}{N} \sigma_j[k]$. This adjustment essentially removes all the messages that node $v_i$ was receiving and incorporating in its calculation that should not have been incorporated (among other reasons because these values were computed by node $v_j$ using a degree that did not consider node $v_i$). 
\\ The above two adjustments that node $v_j$ needs to make are as follows:
$$
\varepsilon_{ji}^+[k] = - \frac{1}{N} \sigma_j[k] + \frac{1}{N} \rho_{ji}[k] \; .
$$

Note that if, at iteration $k$, node $v_j$ changes its perception about multiple neighboring nodes that were previously considered untrustworthy, the above adjustments have to be performed for each such neighbor. In other words, if we let $\Delta \mathcal{T}_j[k] = \mathcal{N}_j \cap (\mathcal{T}_j[k] \setminus \mathcal{T}_j[k-1])$ be the set of previously untrustworthy neighbors of node $v_j$ that become trustworthy at iteration $k$, then, the total adjustment that node $v_j$ needs to make is as follows:
$$
\begin{array}{rcl}
\varepsilon_j^+[k] & = & \sum_{v_i \in \Delta \mathcal{T}_j[k]} \varepsilon_{ji}^-[k] \\ \\
 & = & - | \Delta \mathcal{T}_{j}[k] | \frac{1}{N} \sigma_j[k] + \frac{1}{N} \sum_{v_i \in \Delta \mathcal{T}_j[k]} \rho_{ji}[k] \; .
\end{array}
$$

\end{itemize}

It is interesting to note that Cases~1 and~2 can be easily merged together as follows. At iteration $k$, node $v_j$ sets its $\varepsilon_j[k]$ value as $\varepsilon_j[k] = \varepsilon_j^-[k] + \varepsilon_j^+[k]$, i.e.,
\begin{eqnarray*}
\varepsilon_j[k] & = & (| \Delta \mathcal{U}_{j}[k] | - | \Delta \mathcal{T}_j[k] |) \frac{1}{N} \sigma_j[k] - \\
 & & ~~ - \frac{1}{N} \sum_{v_i \in \Delta \mathcal{U}_j[k]} \rho_{ji}[k] + \frac{1}{N} \sum_{v_i \in \Delta \mathcal{T}_j[k]} \rho_{ji}[k] \; ,
\end{eqnarray*} 
and then updates $x_j[k+1]$ following \eqref{EQbasicadjustment}:
$$
\begin{array}{rcl}
x_j[k+1] & = & \left ( 1 - \frac{D_j[k]}{N} \right ) x_j[k] + \\ 
 & & + \frac{1}{N} \sum_{v_i \in \mathcal{N}_j[k]} (\rho_{ji}[k+1] - \rho_{ji}[k]) + \varepsilon_j[k] \; ,
\end{array}
$$
where $\rho_{ji}[k+1] = \sigma_i[k+1]$.

In our analysis and the pseudocode of Algorithm~1, we use an equivalent but simpler way to perform the updates. More specifically, at iteration $k$, node $v_j$ receives $\{ \sigma_i[k+1] \; | \; v_i \in \mathcal{N}_j \}$ and sets $\mu_{ji}$ for each neighbor $v_i \in \mathcal{N}_j$ as follows:
$$
\mu_{ji}[k+1]  = \left \{ 
\begin{array}{ll} 
\sigma_i[k+1] \;, & \forall v_i \in \mathcal{N}_j[k] \; , \\
0 \;, & \text{otherwise;}
\end{array} \right .
$$
then node $v_j$ sets 
$$
\begin{array}{rcl}
x_j[k+1] & = & \left ( 1 - \frac{D_j[k]}{N} \right ) x_j[k] + \\
 & & + \frac{1}{N} \sum_{v_i \in \mathcal{N}_j} (\mu_{ji}[k+1] - \mu_{ji}[k]) + e_j[k]
 \end{array}
$$
where $e_j[k]$ is given by 
$$
e_j[k] = (| \Delta \mathcal{U}_{j}[k] | - | \Delta \mathcal{T}_j[k] |) \frac{1}{N} \sigma_j[k] \; .
$$
Note that if neighbor $v_i$ becomes untrustworthy for the first time at iteration $k$, $\mu_{ji}[k+1]$ becomes zero, so that the summation above effectively subtracts $\mu_{ji}[k] = \sigma_i[k]$. In addition, as long as neighbor $v_i$ remains untrustworthy to node $v_j$, its $\mu_{ji}$'s are zero (thus, the difference between two consecutive $\mu_{ji}$ is also zero, i.e., $v_i$ is effectively ignored in the summation). Finally, if $v_i$ becomes trustworthy again at some iteration $k'$, $\mu_{ji}[k'+1] = \sigma_i[k'+1]$, which means that the running sum is added back into the computation (since $\mu_{ji}[k'] = 0$).

\begin{remark}
It is worth pointing out that all of the above is described from the perspective of node $v_j$ based on its own trust assessments (i.e., its own sets of trustworthy nodes $\mathcal{T}_j[k]$ and $\mathcal{T}_j[k-1]$). All other (trustworthy) nodes are assumed to follow an identical procedure based on their own trust assessments. Finally, note that malicious nodes are allowed to behave arbitrarily.
\end{remark}

\begin{algorithm}[t!]
  \begin{small}
	\NoCaptionOfAlgo
	\SetKwBlock{Begin}{Compute:}{end}
	\nl \KwIn{Node $v_j$ knows $x_j$, $\mathcal{N}_j$, and has access to sets $\mathcal{T}_j[k]$ for $k=0,1,2,...$}
	\SetKwFor{While}{Initialization:}{}{endw}
	\While{}{
	\nl Node $v_j$ initializes $\mathcal{T}_j[-1] = \mathcal{V}$, $\mathcal{N}_j[-1] = \mathcal{N}_j$, \\ 
	$x_j[0] = x_j$, $\sigma_j[0] = 0$, and $\mu_{ji}[0] =0$, $\forall v_i \in \mathcal{N}_j$}
	\SetKwFor{For}{for}{}{endfor}
	\For{$k \geq 0:$}{
		\BlankLine
         	  \nl  Receive $\mathcal{T}_j[k]$ \\
		\SetKwFor{While}{Update Sets of Trustworthy Neighbors:}{}{endw}
		\While{}{
		          \nl  Set $\mathcal{N}_j[k] = \mathcal{N}_j \cap \mathcal{T}_j[k]$ \\
		          \nl  Set $D_j[k] = | \mathcal{N}_j[k] |$ \\
		          \nl  Set $\Delta \mathcal{U}_j[k] = \mathcal{N}_j \cap (\mathcal{T}_j[k-1] \setminus \mathcal{T}_j[k])$ \\ 
		          \nl  Set $\Delta \mathcal{T}_j[k] = \mathcal{N}_j \cap (\mathcal{T}_j[k] \setminus \mathcal{T}_j[k-1])$ \\
		}
		
		\SetKwFor{While}{Compute:}{}{endw}
		\While{}{
			  \nl $   \sigma_j[k+1]=\sigma_j[k] + x_j[k]$ \label{e_sigma0} \\
		}

		\BlankLine
		\nl\textbf{Broadcast:} $\sigma_j[k+1]$ to all $v_l \in \mathcal{N}_j$ 
		\BlankLine
		\nl\textbf{Receive:} $\sigma_i[k+1]$ from each $v_i \in \mathcal{N}_j$ 

		\BlankLine
				\SetKwFor{While}{Update $\mu_{ji}$'s:}{}{endw}
		\While{}{
                           \nl For each $v_i \in \mathcal{N}_j$ set
                           \nl $
\mu_{ji}[k+1]  = \left \{ 
\begin{array}{ll} 
\sigma_i[k+1] \;, & \forall v_i \in \mathcal{N}_j[k] \\
0 \;, & \text{otherwise}
\end{array} \right .
$
		}
    		\BlankLine
		\SetKwFor{While}{Compute:}{}{endw}
		\While{}{
       		\nl $e_j[k] := (| \Delta \mathcal{U}_j[k] | - | \Delta \mathcal{T}_j[k] |) \frac{1}{N} \sigma_j[k]$ \\ ~\\
          	\nl $x_j[k+1] = \left (1 - \frac{D_j[k]}{N} \right ) x_j[k] +$ \\ ~\\
	        \nl ~~~~~~~~~~~~~~~~~~~~$+ \frac{1}{N} \sum_{v_i \in \mathcal{N}_j} (\mu_{ji}[k+1] - \mu_{ji}[k]) + e_j[k]$ \\ ~\\
		}
				
	}
	\caption{\textbf{Algorithm~1: Trustworthy Distributed Averaging}}
	\label{algorithm_0}
	\end{small}
\end{algorithm}

\subsection{Proof of Correctness} 
\label{SUBSECpseudocode}

Algorithm~1 is presented from the perspective of node $v_j$. Each node $v_j$ is assumed to know the set of its neighbors $\mathcal{N}_j$ and receives at each iteration $k$ (or can determine based on trust measurements) its set of trustworthy nodes $\mathcal{T}_j[k]$. Node $v_j$ computes the set of trustworthy neighbors at iteration $k$ as $\mathcal{N}_j[k] = \mathcal{N}_j \cap \mathcal{T}_j[k]$ and its degree as $D_j[k] = | \mathcal{N}_j[k] |$. To establish the main convergence result, we first state an important invariant that holds during the execution of Algorithm~\ref{algorithm_0}.

\begin{theorem}
\label{THEinvariant}
Consider a distributed system, captured by a bidirectional communication graph $\mathcal{G} = (\mathcal{V}, \mathcal{E})$, in which each node $v_j \in \mathcal{V}$ has value $x_j$. A certain subset $\mathcal{V}_T$, $\mathcal{V}_T \subseteq \mathcal{V}$, of the nodes are trustworthy (non-malicious), whereas the remaining nodes in $\mathcal{V}_M = \mathcal{V} \setminus \mathcal{V}_T$ are untrustworthy (malicious).  Consider the execution of Algorithm~1 where at each iteration~$k$, each node $v_j$ has a binary indicator $t_{ij}[k]$ regarding the trust it places to node $v_i$. Let $\mathcal{N}_j[k] = \mathcal{N}_j\cap \mathcal{T}_j[k]$, where $\mathcal{T}_j[k] = \{ v_i \; | \; t_{ij}[k]=1 \}$ is the set of nodes that are considered trustworthy by node $v_j$ at iteration $k$ (i.e., $\mathcal{N}_j[k]$ is the set of neighbors that are considered trustworthy by node $v_j$ at iteration $k$). Under Assumptions~0--2,\footnote{Note that Assumption~3 is not really needed for the invariant to hold; in fact, Assumption~1 can also be relaxed.} for each trustworthy node $v_j \in \mathcal{V}_T$, it holds at each iteration $k$ ($k = 0, 1, ...$)
\begin{equation}
x_j[k] - x_j = - \frac{D_j[k-1]}{N} \sigma_j[k] + \frac{1}{N} \sum_{v_i \in \mathcal{N}_j[k-1]} \sigma_i[k] \; , \label{EQinvariant}
\end{equation}
where $D_j[k] = | \mathcal{N}_j[k] |$ is the number of trustworthy neighbors of node $v_j$ at iteration $k$ (recall that, according to the initialization of Algorithm~\ref{algorithm_0}, we have $D_j[-1] = | \mathcal{N}_j[-1] | = D_j$).
\end{theorem}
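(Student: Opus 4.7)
The plan is to prove \eqref{EQinvariant} by induction on $k$, treating it as a purely algebraic consequence of the update that node $v_j$ executes locally. Note that Assumption~2 (connectivity of $\mathcal{G}_T$) and Assumption~3 (eventual correctness of the trust assessments) play no role: the identity depends only on computations carried out at $v_j$, so it remains valid even if malicious neighbors of $v_j$ broadcast arbitrary running sums $\sigma_i[k]$.

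The base case $k=0$ follows directly from the initialization in Algorithm~\ref{algorithm_0}: we have $x_j[0]=x_j$, $\sigma_j[0]=0$, $\mu_{ji}[0]=0$, and $\sigma_i[0]=0$ at every node; since by convention $\mathcal{N}_j[-1]=\mathcal{N}_j$ and $D_j[-1]=D_j$, both sides of \eqref{EQinvariant} vanish.

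For the inductive step, I would first recast the update for $x_j[k+1]$ using three structural identities. First, from the definition of $\mu_{ji}$, the incoming contribution $\sum_{v_i\in \mathcal{N}_j}(\mu_{ji}[k+1]-\mu_{ji}[k])$ equals $\sum_{v_i\in \mathcal{N}_j[k]}\sigma_i[k+1] - \sum_{v_i\in \mathcal{N}_j[k-1]}\sigma_i[k]$. Second, since $\mathcal{N}_j[k] = (\mathcal{N}_j[k-1]\setminus \Delta \mathcal{U}_j[k]) \cup \Delta \mathcal{T}_j[k]$ with the union disjoint, we have $D_j[k]=D_j[k-1]-|\Delta \mathcal{U}_j[k]|+|\Delta \mathcal{T}_j[k]|$, so $e_j[k]=(D_j[k-1]-D_j[k])\sigma_j[k]/N$. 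Third, $\sigma_j[k+1]=\sigma_j[k]+x_j[k]$. I would then substitute the inductive hypothesis into $x_j[k+1]$ and compare against the target expression for $k+1$, which, after expanding $\sigma_j[k+1]=\sigma_j[k]+x_j[k]$ and once more invoking the hypothesis on $x_j[k]$, is written in the same basis as the substituted update, namely $\{x_j,\,\sigma_j[k],\,\sum_{v_i\in \mathcal{N}_j[k-1]}\sigma_i[k],\,\sum_{v_i\in \mathcal{N}_j[k]}\sigma_i[k+1]\}$. The invariant then follows by matching coefficients on both sides.

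The main obstacle is the coefficient mismatch that arises because the inductive hypothesis carries the factor $D_j[k-1]/N$ while the current update uses $D_j[k]/N$; the term $e_j[k]$ exists precisely to absorb this mismatch. Concretely, the cross term $-(1-D_j[k]/N)(D_j[k-1]/N)\sigma_j[k]$ produced by letting $(1-D_j[k]/N)$ act on the hypothesis combines with $e_j[k]$ and with the $-D_j[k]x_j[k]/N$ piece extracted from the target (after expanding $\sigma_j[k+1]$ and reapplying the hypothesis) to yield exactly the desired coefficient $-D_j[k]/N$ on $\sigma_j[k+1]$. Once this cancellation is verified, the $\sum \sigma_i$ contributions match on inspection, the $x_j$ terms cancel, and the induction closes, confirming that \eqref{EQinvariant} is a structural invariant of Algorithm~\ref{algorithm_0} that requires neither connectivity of $\mathcal{G}_T$ nor correctness of the trust assessments.
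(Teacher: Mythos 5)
Your proposal is correct and follows essentially the same route as the paper: induction on $k$, using the telescoping identity for the $\mu_{ji}$'s, the degree bookkeeping $D_j[k]=D_j[k-1]-|\Delta \mathcal{U}_j[k]|+|\Delta \mathcal{T}_j[k]|$ (which turns $e_j[k]$ into $(D_j[k-1]-D_j[k])\sigma_j[k]/N$), and $\sigma_j[k+1]=\sigma_j[k]+x_j[k]$. The only difference is presentational — you match coefficients in a fixed basis after substituting the hypothesis on both sides, whereas the paper substitutes forward and simplifies directly — and your observation that Assumptions~2 and~3 are not needed matches the paper's own footnote.
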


\begin{proof}
We prove the invariant by induction on $k$. At $k=0$, the invariant clearly holds since $x_j[0]=x_j$ and all $\sigma_i[0] = 0$. Notice that the exact $\mathcal{T}_j[-1]$ and $D_j[-1]$ are not really relevant here (because all $\sigma_i[0]=0$) but Algorithm~\ref{algorithm_0} sets them to $\mathcal{T}_j[-1] = \mathcal{V}$ and $D_j[-1]=D_j$.

Suppose that at $k=t$ the invariant holds, i.e.,
\begin{equation}
x_j[t] - x_j = - \frac{D_j[t-1]}{N} \sigma_j[t]  + \frac{1}{N} \sum_{v_i \in \mathcal{N}_j[t-1]} \sigma_i[t] \; . \label{EQproofinvariant}
\end{equation}
At iteration $k=t+1$, we need to show that 
$$
x_j[t+1] - x_j = - \frac{D_j[t]}{N} \sigma_j[t+1]  + \frac{1}{N} \sum_{v_i \in \mathcal{N}_j[t]} \sigma_i[t+1] \; .
$$

From lines~8 and~14 of Algorithm~\ref{algorithm_0}, we have the following:
$$
\begin{array}{rcl}
\sigma_j[t+1] & = & \sigma_j[t] + x_j[t] \; , \\ \\
x_j[t+1]  & = & \left ( 1 - \frac{D_j[t]}{N} \right ) x_j[t] + \\ \\
 & & ~ + \frac{1}{N} \sum_{v_i \in \mathcal{N}_j} (\mu_{ji}[t+1] - \mu_{ji}[t]) + e_j[t] \; , 
\end{array}
$$
where $e_j[t] = (| \Delta \mathcal{U}_{j}[t] | - | \Delta \mathcal{T}_j[t] |) \frac{1}{N} \sigma_j[t]$ (line~13) and $\mu_{ji}[t+1] = \sigma_i[t+1]$ if $v_i \in \mathcal{N}_j[t]$, otherwise, $\mu_{ji}[t+1] = 0$ (line~12).

Substituting $e_j[t]$ and using the iteration invariant in \eqref{EQproofinvariant}, we have 
$$
\begin{array}{rcl}
x_j[t+1]  & = & \underbrace{x_j - \frac{D_j[t-1]}{N} \sigma_j[t] + \frac{1}{N} \sum_{v_i \in \mathcal{N}_j[t-1]} \sigma_i[t]}_{=x_j[t] \text{ by \eqref{EQproofinvariant}}} \\ \\
              &    & - \frac{D_j[t]}{N} x_j[t] + \frac{1}{N} \sum_{v_i \in \mathcal{N}_j} (\mu_{ji}[t+1] - \mu_{ji}[t]) \\ \\
	      &    & + (| \Delta \mathcal{U}_{j}[t] | - | \Delta \mathcal{T}_j[t] |) \frac{1}{N} \sigma_j[t] \; .
\end{array}
$$
Rearranging terms, we have 
$$
\begin{array}{l}
x_j[t+1] - x_j = \\
~ = - \frac{1}{N} ( D_j[t-1] - | \Delta \mathcal{U}_{j}[t] | + | \Delta \mathcal{T}_j[t] |) \sigma_j[t] - \frac{D_j[t]}{N} x_j[t] + \\ \\
~ + \frac{1}{N} \sum_{v_i \in \mathcal{N}_j} (\mu_{ji}[t+1] - \mu_{ji}[t]) + \frac{1}{N} \sum_{v_i \in \mathcal{N}_j[t-1]} \sigma_i[t] \; . 
\end{array}
$$

We observe that $D_j[t-1] - | \Delta \mathcal{U}_{j}[t] | + | \Delta \mathcal{T}_j[t]| = D_j[t]$ and that $\sigma_j[t]+x_j[t]=\sigma_j[t+1]$; thus, 
\begin{eqnarray}
x_j[t+1] - x_j & = & - \frac{1}{N} D_j[t] \sigma_j[t+1] + \frac{1}{N} \sum_{v_i \in \mathcal{N}_j[t-1]} \sigma_i[t] \nonumber \\ \nonumber \\ 
& & + \frac{1}{N} \sum_{v_i \in \mathcal{N}_j} (\mu_{ji}[t+1] - \mu_{ji}[t]) \; . \label{EQinter1}
\end{eqnarray}
Furthermore, since $\mu_{ji}[t+1] = \sigma_i[t+1]$ if $v_i \in \mathcal{N}_j[t]$ (otherwise, $\mu_{ji}[t+1] = 0$), we can write 
\begin{eqnarray}
\frac{1}{N} \sum_{v_i \in \mathcal{N}_j} (\mu_{ji}[t+1] - \mu_{ji}[t]) & = & \nonumber \\
 = \frac{1}{N} \sum_{v_i \in \mathcal{N}_j[t]} \sigma_{i}[t+1] & - & \frac{1}{N} \sum_{v_i \in \mathcal{N}_j[t-1]} \sigma_{i}[t] \; . ~~~~~ \label{EQinter2}
\end{eqnarray}

Then, by using \eqref{EQinter2} together with \eqref{EQinter1}, we obtain
$$
x_j[t+1] - x_j = - \frac{D_j[t]}{N} \sigma_j[t+1] + \frac{1}{N} \sum_{v_i \in \mathcal{N}_j[t]} \sigma_i[t+1] \; ,
$$
which establishes the proof of the invariant. \qed
\end{proof}

The main convergence result for Algorithm~1 is stated next.

\begin{theorem}
\label{THEmain}
Consider a distributed system, captured by a bidirectional communication graph $\mathcal{G} = (\mathcal{V}, \mathcal{E})$, in which each node $v_j \in \mathcal{V}$ has an initial value $x_j$. A certain subset $\mathcal{V}_T$, $\mathcal{V}_T \subseteq \mathcal{V}$, of the nodes are trustworthy (non-malicious), whereas the remaining nodes in $\mathcal{V}_M = \mathcal{V} \setminus \mathcal{V}_T$ are untrustworthy (malicious). At various points in time, each node $v_j$ receives information that allows it to compute a binary indicator $t_{ij}$ regarding the trust it places to node $v_i$. Under Assumptions~$0$--$3$, if the non-malicious nodes execute Algorithm~1, they asymptotically converge to the average of their initial values  given by $\overline{X}_T = \frac{\sum_{v_l \in \mathcal{V}_T} x_l}{|\mathcal{V}_T|}$.
\end{theorem}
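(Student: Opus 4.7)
The plan is to exploit Assumption~3 to reduce, after a finite transient, the dynamics of the trustworthy nodes to a standard doubly stochastic linear iteration over the connected induced subgraph $\mathcal{G}_T$, and then to use the invariant of Theorem~\ref{THEinvariant} to pin down the precise value of the conserved sum.

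First, I would analyze the dynamics for $k \geq k_{\max}+1$. By Assumption~3, for every trustworthy node $v_j \in \mathcal{V}_T$ and every $k \geq k_{\max}$ we have $\mathcal{T}_j[k] = \mathcal{V}_T$, so $\mathcal{N}_j[k] = \mathcal{N}_j \cap \mathcal{V}_T$ and $\Delta\mathcal{U}_j[k] = \Delta\mathcal{T}_j[k] = \emptyset$, which forces $e_j[k] = 0$. Consequently, for $k \geq k_{\max}+1$, $\mu_{ji}[k] = \sigma_i[k]$ when $v_i \in \mathcal{N}_j \cap \mathcal{V}_T$ and $\mu_{ji}[k] = 0$ when $v_i \in \mathcal{N}_j \cap \mathcal{V}_M$, so the update line of Algorithm~\ref{algorithm_0} collapses to
\begin{equation*}
x_j[k+1] = \Bigl(1 - \tfrac{D_j[k]}{N}\Bigr) x_j[k] + \tfrac{1}{N} \sum_{v_i \in \mathcal{N}_j \cap \mathcal{V}_T} x_i[k],
\end{equation*}
using $\sigma_i[k+1] - \sigma_i[k] = x_i[k]$ for trustworthy $v_i$. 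Restricted to the subvector indexed by $\mathcal{V}_T$, this is a linear iteration $x_T[k+1] = W_T\, x_T[k]$ whose weight matrix $W_T$ is symmetric (because $\mathcal{G}$ is bidirectional), has strictly positive diagonal entries (since $D_j[k] \leq N-1$), and conforms to the connected graph $\mathcal{G}_T$. Hence $W_T$ is primitive and doubly stochastic, and standard results give $\lim_{k\to\infty} x_T[k] = \tfrac{1}{|\mathcal{V}_T|}\mathbf{1}\mathbf{1}^{\mathrm{T}} x_T[k_{\max}+1]$.

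Second, I would identify the conserved sum by applying the invariant \eqref{EQinvariant} at $k = k_{\max}+1$ and summing over $v_j \in \mathcal{V}_T$. Since $\mathcal{N}_j[k_{\max}] = \mathcal{N}_j \cap \mathcal{V}_T$ for trustworthy $v_j$, the right-hand side of the summed invariant is
\begin{equation*}
\tfrac{1}{N}\Bigl[-\sum_{v_j \in \mathcal{V}_T} D_j[k_{\max}]\, \sigma_j[k_{\max}+1] + \sum_{v_j \in \mathcal{V}_T}\sum_{v_i \in \mathcal{N}_j \cap \mathcal{V}_T} \sigma_i[k_{\max}+1]\Bigr].
\end{equation*}
By bidirectionality of $\mathcal{G}_T$, swapping the order of summation in the double sum shows that each $\sigma_i[k_{\max}+1]$ (for $v_i \in \mathcal{V}_T$) is counted exactly $D_i[k_{\max}]$ times, so the two terms cancel and $\sum_{v_j \in \mathcal{V}_T} x_j[k_{\max}+1] = \sum_{v_j \in \mathcal{V}_T} x_j$. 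Combining this with sum preservation under the doubly stochastic $W_T$ yields $\sum_{v_j \in \mathcal{V}_T} x_j[k] = \sum_{v_j \in \mathcal{V}_T} x_j$ for every $k \geq k_{\max}+1$, so the common limit of the trustworthy nodes equals $\overline{X}_T$.

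The main obstacle I anticipate is handling the transition step $k = k_{\max}$ cleanly: at that step $\Delta\mathcal{U}_j[k_{\max}]$ or $\Delta\mathcal{T}_j[k_{\max}]$ may still be nonempty, $e_j[k_{\max}]$ may be nonzero, and some $\mu_{ji}[k_{\max}]$ may not yet equal the current running sum of $v_i$. Crucially, however, the invariant of Theorem~\ref{THEinvariant} was established without any assumption about the history of trust-set changes, so it already subsumes all of these transient corrections; evaluating it at $k_{\max}+1$ is precisely what lets me bypass tracking the potentially complicated past. A secondary subtlety to confirm is that arbitrary behavior by malicious nodes, which may broadcast any values as their $\sigma_i$, cannot corrupt the post-$k_{\max}$ dynamics of the trustworthy subsystem; this is immediate because $\mu_{ji} \equiv 0$ for malicious neighbors once their status has settled, and each trustworthy $v_j$ uses only $D_j[k] = |\mathcal{N}_j \cap \mathcal{V}_T|$ in its self-weight.
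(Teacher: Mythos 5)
Your proposal is correct and follows essentially the same route as the paper's proof: after the trust assessments settle (Assumption~3), the trustworthy nodes run a standard primitive doubly stochastic iteration on the connected graph $\mathcal{G}_T$, and the invariant of Theorem~\ref{THEinvariant}, summed over $\mathcal{V}_T$ at the settling time with the symmetric cancellation of the $\sigma$ terms over the edges of $\mathcal{G}_T$, shows the conserved sum equals $\sum_{v_l \in \mathcal{V}_T} x_l$. Your treatment is somewhat more explicit than the paper's (spelling out the collapse of the update rule, the primitivity of $W_T$, and the irrelevance of malicious broadcasts after settling), but the decomposition and the key lemma used are identical.
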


\begin{proof}
Consider a large $k_0$, such that all $t_{ij}[k]$ for $k \geq k_0$ (for all $v_j \in \mathcal{V}_T$) correctly reflect the status of node $v_i$ from the point view of node $v_j$. Such a $k_0$ exists under Assumption~3. Effectively, from this iteration onwards, the nodes are running a standard average consensus algorithm on the induced subgraph $\mathcal{G}_T$ that only involves the trustworthy nodes and is connected under Assumption~2. Therefore, the values $x_j[k]$ of all trustworthy nodes will converge to 
\begin{equation}
 \lim_{k\rightarrow\infty} x_j[k]=\frac{\sum_{v_l \in \mathcal{V}_T} x_l[k_0+1]}{ | \mathcal{V}_T |} \; .
\end{equation}
We next argue that $\sum_{v_l \in \mathcal{V}_T} x_l[k_0+1] = \sum_{v_l \in \mathcal{V}_T} x_l$, which establishes the result.

Using the invariant in Theorem~\ref{THEinvariant} at time $k_0+1$, we have that, for each $v_j \in \mathcal{V}_T$, it holds that 
$$
x_j[k_0+1] - x_j = - \frac{D_j[k_0]}{N} \sigma_j[k_0+1] + \frac{1}{N} \sum_{v_i \in \mathcal{N}_j[k_0]} \sigma_i[k_0+1] \; ,
$$
where $D_j[k_0]$ ($\mathcal{N}_j[k_0]$) is the degree (set of neighbors) of node $v_j$ in the graph $\mathcal{G}_{T}$. If we sum all such equations over all $v_j$ in $\mathcal{V}_T$, on the left, we have the sum $\sum_{v_j \in \mathcal{V}_T} (x_j[k_0+1] - x_j)$, whereas on the right we have 
$$
\begin{array}{l}
- \frac{1}{N} \sum_{v_j \in \mathcal{V}_T} (D_j[k_0] \sigma_j[k_0+1]) + \\
~~~~ \frac{1}{N} \sum_{v_j \in \mathcal{V}_T} \left ( \sum_{v_i \in \mathcal{N}_j[k_0]} \sigma_i[k_0+1] \right ) = 0 \; ,
\end{array}
$$
where the simplification occurs because, for each link $(v_j, v_i)$ of $\mathcal{G}_T$, we have $\sigma_j$ appearing twice, once with a positive sign and once with a negative sign. 

This establishes that $\sum_{v_j \in \mathcal{V}_T} (x_j[k_0+1] - x_j) = 0$, i.e., $\sum_{v_j \in \mathcal{V}_T} x_j[k_0+1] = \sum_{v_j \in \mathcal{V}_T} x_j$, which essentially proves the theorem. \qed
\end{proof}

\section{Numerical Simulations}
\label{SECexamples}

\subsection{5-Node Network}

Consider the bidirectional communication graph $\mathcal{G} = (\mathcal{V}, \mathcal{E})$ in Fig.~\ref{FIGsmallexample}, with nodes $\mathcal{V}= \{ v_1, v_2, v_3, v_4, v_5 \}$ and edges $\mathcal{E} = \{ (v_1, v_2), (v_2, v_1), ..., (v_4, v_5), (v_5, v_4) \}$ as indicated in the figure. We assume that the set of trustworthy nodes is $\mathcal{V}_T = \{ v_1, v_2, v_3, v_4 \}$ and the set of malicious nodes is $\mathcal{V}_M = \{ v_5 \}$. Notice that the induced graph $\mathcal{G}_{T} = (\mathcal{V}_T, \mathcal{E}_T)$ (where $\mathcal{E}_T = \{ (v_2, v_1), (v_1, v_2), (v_2, v_3), (v_3, v_2), (v_3, v_4),$ $(v_4, v_3), (v_4, v_1), (v_1, v_4) \}$) is connected. We assume that the initial values of the nodes are $x_i = i$ for $i=1, 2, ..., 5$, so that the average is $\overline{X}=3$ and the average of the trustworthy nodes is $\overline{X}_T = 2.5$. Throughout this example, we set the edge weights equal to $\frac{1}{5}$.

\begin{figure}
    \centering
    \includegraphics[width=0.5\linewidth]{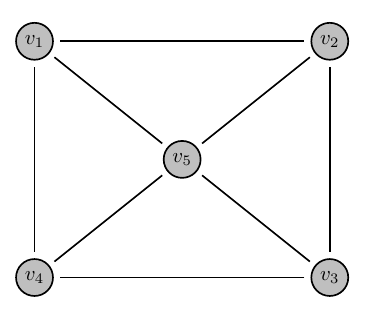}
\caption{Bidirectional communication graph considered in the 5-node example.}
\label{FIGsmallexample}
\end{figure}

\begin{figure*}
    \centering
    \includegraphics[width=0.33\linewidth]{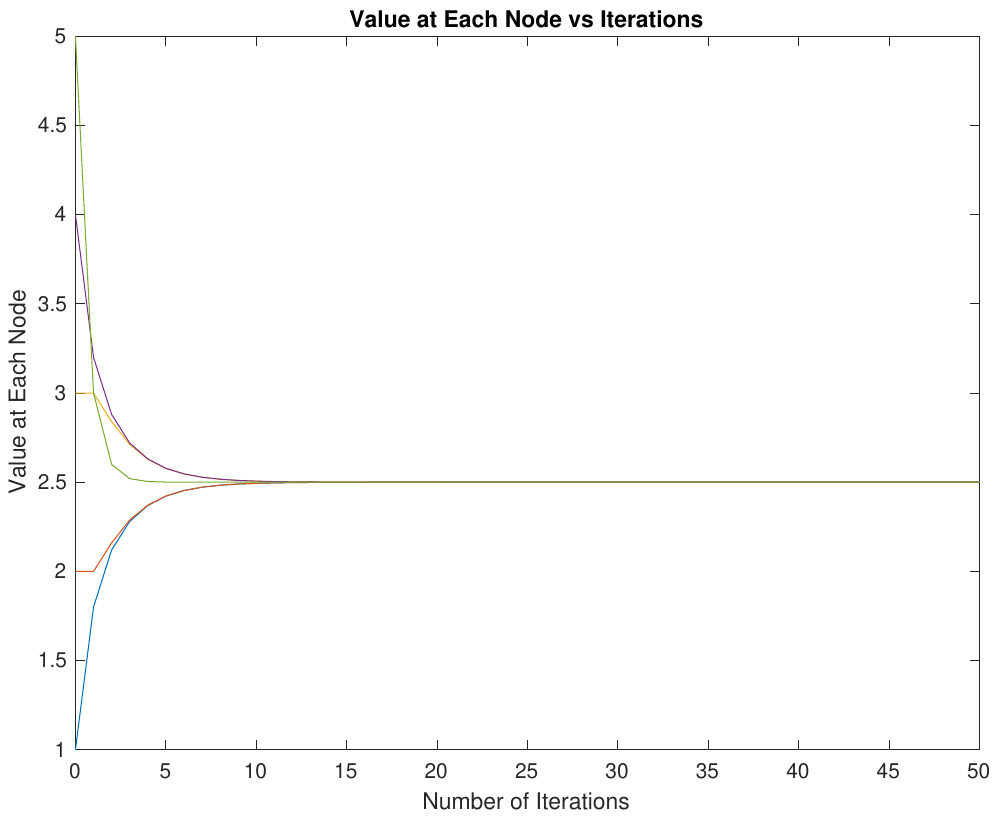}~\includegraphics[width=0.33\linewidth]{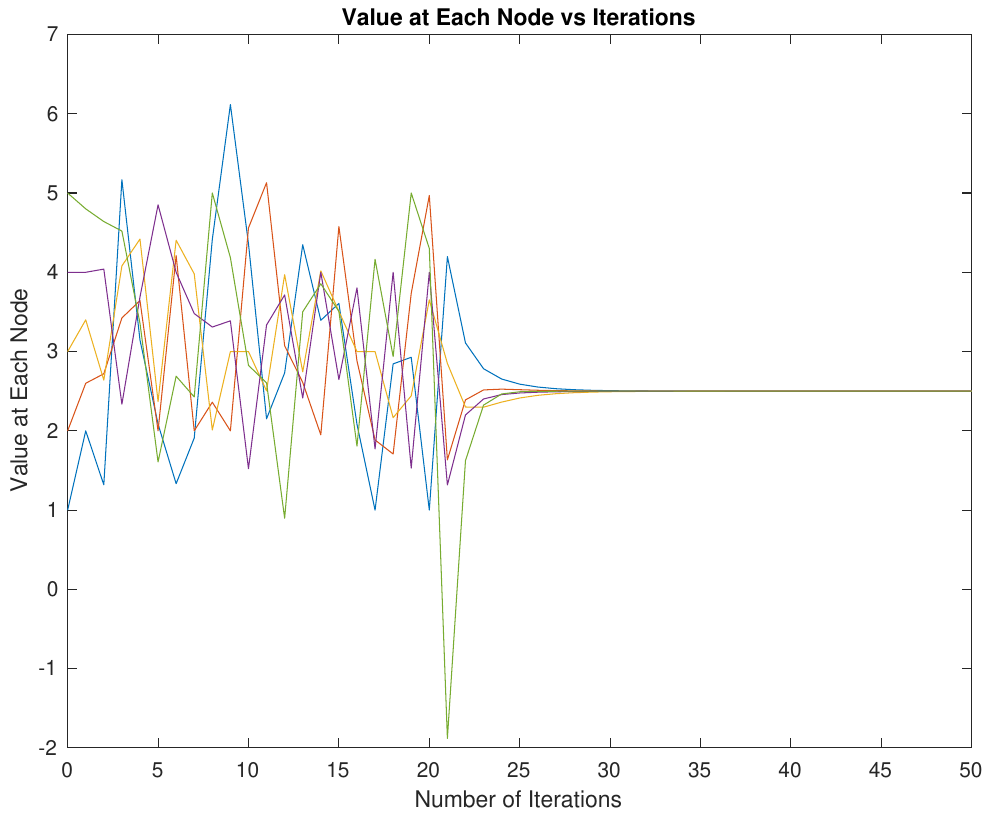}~\includegraphics[width=0.33\linewidth]{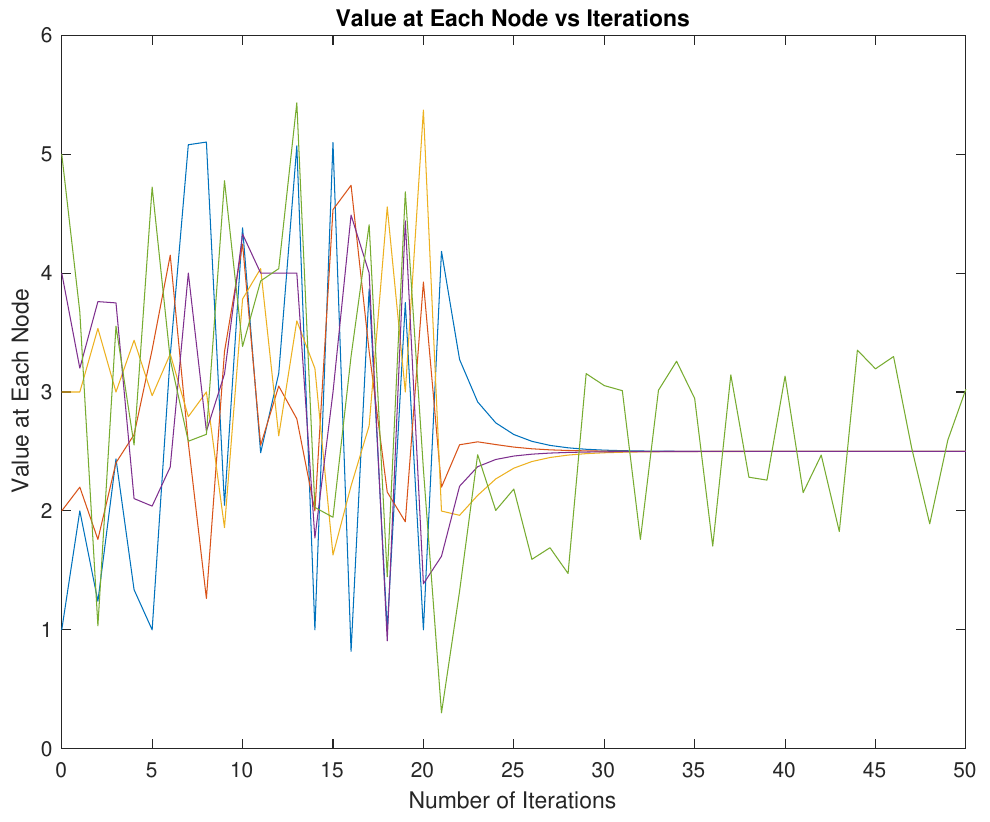}

\caption{The values of nodes in the 5-node example converge to the average $\overline{X}_T=2.5$ when trust assessments are taken into account: correct trust assessments from iteration $k=0$ onwards, with malicious node $v_5$ behaving correctly (left); correct trust assessments from iteration $k=21$ onwards (random trust assessments before $k=21$), with malicious node $v_5$ behaving correctly (middle); correct trust assessments from iteration $k=21$ onwards (random trust assessments before $k=21$), with malicious node $v_5$ behaving incorrectly (right).}
\label{FIGmalicious}
\end{figure*}

We next execute different scenarios to illustrate the operation of the proposed trustworthy distributed averaging algorithm. More specifically, we illustrate three runs of the proposed algorithm, which differ in terms of when trust assessments converge to the correct values and/or the behavior of the malicious node. In all three cases, the non-malicious nodes in $\mathcal{V}_T$ converge to the average of the trustworthy nodes $\overline{X}_T=2.5$, whereas the malicious node may or may not converge depending on its own behavior.

On the left of Fig.~\ref{FIGmalicious}, we see the behavior of the network when nodes perceive node $v_5$ as untrustworthy from the very beginning. In this simulation, node $v_5$ behaves normally (despite the fact that all other nodes perceive it as untrustworthy) and we see that it also converges to the average of the trustworthy nodes.
In the middle of Fig.~\ref{FIGmalicious}, we see the behavior of the network when, up to iteration~20, nodes receive randomly generated binary values for the trust assessments of other nodes; however, after iteration 20, trust assessments settle. We see that the trustworthy nodes converge to the average $\overline{X}_T$ after about 30 iterations. In this simulation, we also assume that the malicious node $v_5$ behaves normally and we see that it also converges to the average $\overline{X}_T$. 
On the right of Fig.~\ref{FIGmalicious}, we see a simulation with the same characteristics as the one in the middle, except that the malicious node $v_5$ behaves arbitrarily (more specifically, at each iteration, node $v_5$ adds a random offset to its $x$ value, which then propagates to the values it transmits to its neighbors). In this case, the malicious node does not converge to a value; however, the trustworthy nodes are able to converge to the average $\overline{X}_T$ after about 30 iterations. Moreover, the transmissions of the malicious node stop influencing the distributed computation, including any offsets it added before iteration~$k=21$.

\subsection{20-Node Network}

\begin{figure*}
    \centering
    \includegraphics[width=0.33\linewidth]{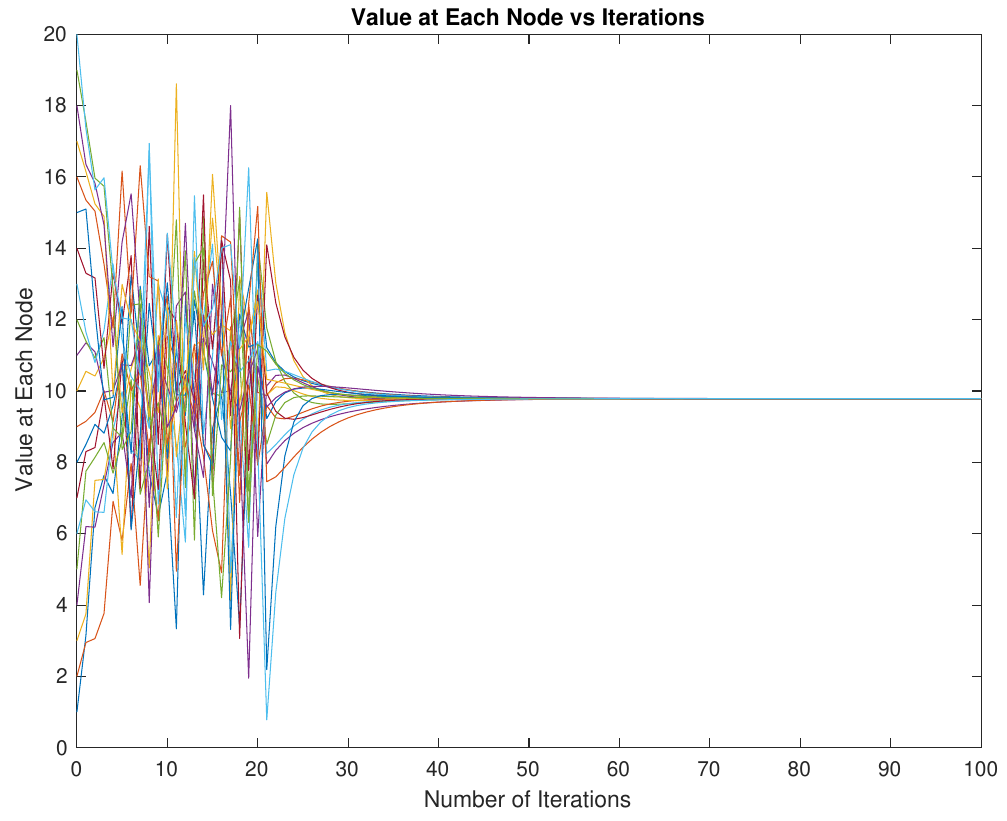}~\includegraphics[width=0.33\linewidth]{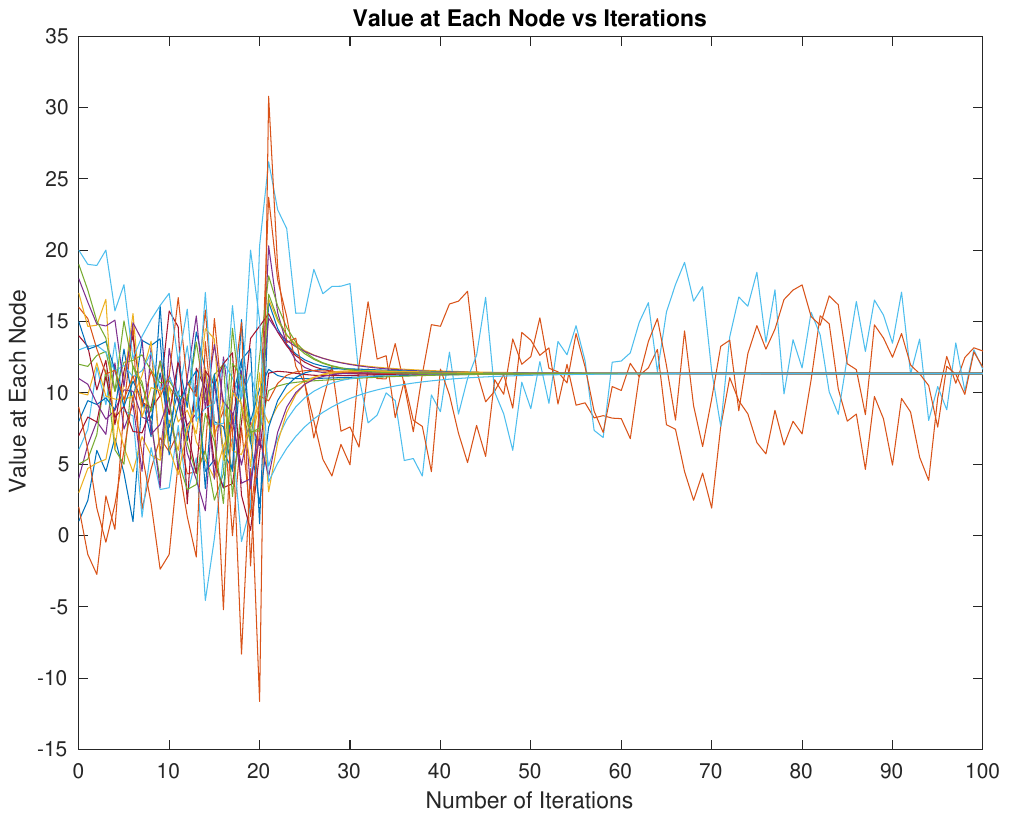}~\includegraphics[width=0.33\linewidth]{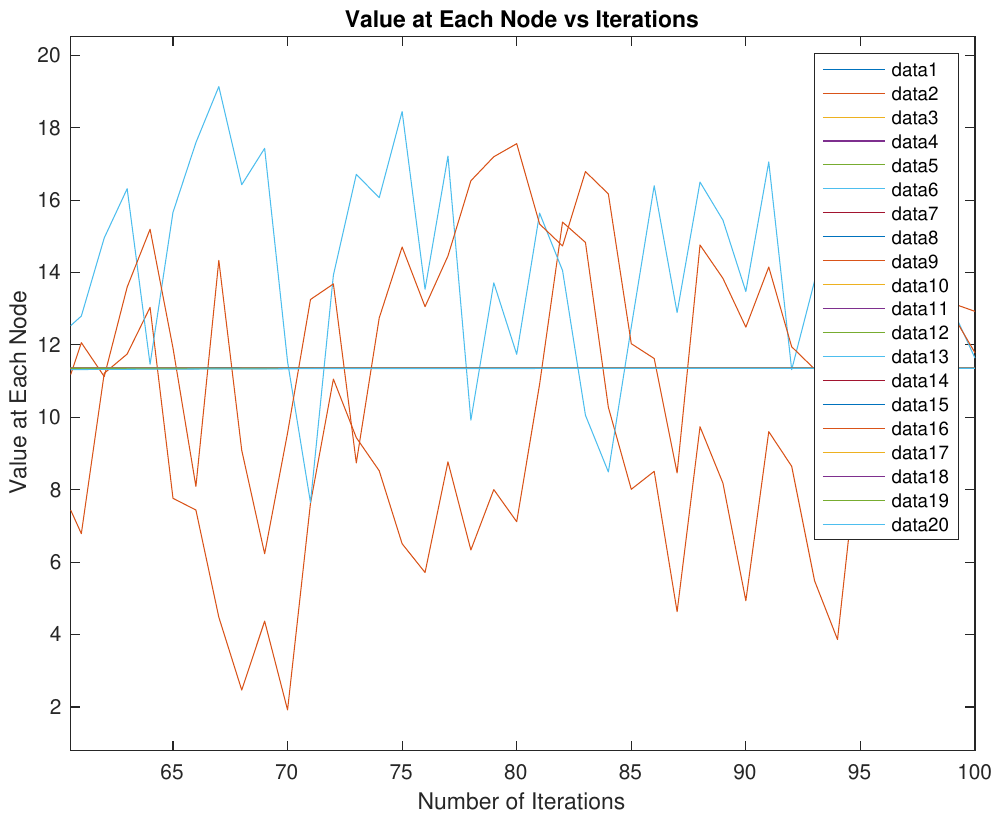}

\caption{The values of non-malicious nodes in the large examples converge to the average $\overline{X}_T$ when trust assessments are taken into account: correct trust assessments from iteration $k=21$ onwards  (random trust assessments before $k=21$), with malicious nodes $\mathcal{V}_M = \{ v_6, v_8, v_{11}, v_{14}, v_{15}, v_{19} \}$ behaving correctly (left); correct trust assessments from iteration $k=21$ onwards (random trust assessments before $k=21$), with malicious nodes $\mathcal{V}_M = \{ v_2, v_6, v_9 \}$ behaving incorrectly (middle); zoomed-in version of plot in the middle, which clearly shows that nodes in $\mathcal{V}_M$ do not converge (right).}
\label{FIGmalicious2}
\end{figure*}

In this section, we present two simulations with larger (randomly generated) connected bidirectional communication graphs that consist of $20$ nodes ($\mathcal{V}= \{ v_1, v_2, ..., v_{20} \}$) with initial values $x_i = i$ for $i=1, 2, ..., 20$, so that $\overline{X} = 10.5$. In both simulations, nodes have randomly generated binary values for their trust assessments about other nodes up to iteration~20; however, after iteration 20, they acquire correct trust assessments. In Fig.~\ref{FIGmalicious2}, we see that the non-malicious nodes in $\mathcal{V}_T$ converge to the average $\overline{X}_T$, whereas malicious nodes may or may not converge depending on their own behavior. Specifically, on the left of Fig.~\ref{FIGmalicious2}, we see the behavior of the network when the set of malicious nodes is $\mathcal{V}_M = \{ v_6, v_8, v_{11}, v_{14}, v_{15}, v_{19} \}$ and the malicious nodes behave correctly. In this case, all nodes converge to the average of the trustworthy nodes, which is $\overline{X}_T = 9.7857$, after about $60$ iterations. In the middle of Fig.~\ref{FIGmalicious2}, we see the behavior of the network when the set of malicious nodes is $\mathcal{V}_M = \{ v_2, v_6, v_9 \}$ and the malicious nodes behave arbitrarily (more specifically, at each iteration, each node in $\mathcal{V}_M$ adds a random offset to its $x$ value, which then propagates to the values it transmits to its neighbors). Again, we see that the non-malicious nodes converge to the average of the trustworthy nodes, which is $\overline{X}_T = 11.3529$, after about $60$ iterations. However, the malicious nodes do not converge, as seen clearly in the zoomed-in plot on the right of Fig.~\ref{FIGmalicious2}.

\section{Distributed Trust Evaluation Protocol}
\label{SECtrust}

The approach described in the earlier sections relies on the availability of trust assessments but it is independent of how such trust assessments are obtained (as long as they satisfy Assumption~3). In the literature, there are many proposals for obtaining trust assessments, including the schemes in \cite{yemini2021characterizing,agkun2022} described earlier in the paper. Inspired by the work in \cite{yuan2019resilient, yuan2021resilient, yuan2021secure}, which considers bidirectional communication graphs where each node is in charge of verifying the proper functionality of each of its neighbors by having access to two-hop information (i.e., by having access to information sent by the neighbors of its neighbors), we propose in this section a scheme that allows nodes that are running Algorithm~1 to assess the trustworthiness of each of their neighbors. This scheme can be embedded in the iterations of Algorithm~\ref{algorithm_0}, effectively allowing nodes to obtain the needed trust assessments.

Initially, all nodes consider their neighbors to be trustworthy (thus, their initial value is included in the average computation); however, if a node attempts to alter the outcome of the average computation (by calculating and transmitting incorrect values), then it should be declared malicious and its initial value should also be removed from the computation of the trustworthy average. Thus far in the paper, we did not have to explicitly define what constitutes a malicious node (since that was seamlessly provided by the trust assessments under Assumption~3). To obtain the trust assessments in this section, we define a malicious node as follows.\footnote{A subtle difference from the setting in the previous section is that, unless a node performs an incorrect update, it is not considered malicious and its initial value is included in the average calculation.}

\begin{definition}
A node $v_i$ executing Algorithm~\ref{algorithm_0} is malicious if, at any iteration $k$, it provides incorrect values $\sigma_i[k+1]$ to its neighbors (or computes incorrect values $x_j[k+1]$ which will inevitably alter $\sigma_i$ at later steps).
\end{definition}

Unlike \cite{yuan2019resilient, yuan2021resilient, yuan2021secure}, the approach proposed in this section does not require a separate (centralized) mechanism that allows all nodes to instantly learn who has been declared malicious, because Algorithm~1 requires each trustworthy node to only have trust assessments about its neighbors and, as we will see, such trust assessments become directly available to trustworthy nodes under the proposed scheme. Moreover, due to the ability of Algorithm~\ref{algorithm_0} to completely remove the effect of earlier exchange of information with malicious nodes, the trust assessments in the proposed protocol can be finalized over several time steps. Finally, as we argue at the end of this section, the invariant that we identified and established in Theorem~\ref{THEinvariant} can be used to perform such checking infrequently (more precisely, at random time instants) and not necessarily at each time step as done in \cite{yuan2019resilient, yuan2021resilient, yuan2021secure}; this is an important feature as it significantly relaxes the two-hop communication overhead imposed by the necessity to obtain trust assessments.

Assumptions~0t and~2t below replace Assumptions~0 and~2 respectively. Assumption~3 is no longer necessary as we will explicitly describe how to obtain $\mathcal{T}_j[k]$ for each node.

\noindent
{\bf Assumption 0t.} Consider a distributed system whose topology is captured by a bidirectional communication graph $\mathcal{G} = (\mathcal{V}, \mathcal{E})$, where the following hold. (i) Each node $v_j$ is aware of the local topology around it up to two hops, i.e., it is aware of its neighbors and the neighbors of its neighbors, which we refer to as its two-hop neighbors and denote by $\mathcal{N}^{(2)}_j := \mathcal{N}_j \cup (\cup_{v_i \in \mathcal{N}_j} \mathcal{N}_i)$, as well as the interconnections among them. (ii) Each node $v_j$ is capable of two types of transmissions, broadcasting messages that are received by all of its neighbors in $\mathcal{N}_j$, and two-hop broadcasting messages that are received by all of its two-hop neighbors in $\mathcal{N}^{(2)}_j$. Furthermore, we assume that both types of transmissions are associated with a unique node ID that allows receiving nodes to identify the sending node.

\begin{remark}
Messages to two-hop neighbors can be sent in various ways, e.g., by transmitting at a higher power. This is likely to be more expensive and undesirable and is one of the reasons we propose, at the end of this section, a scheme that limits the use of such transmissions. An alternative way of thinking about two-hop transmissions is that we start with a dense network topology, but we divide the neighborhood of each node, {\em on purpose}, to one-hop and two-hop neighbors in order to provide a mechanism for trust assessments (and thus resilience to untrustworthy nodes). In other words, we carefully design the network topology to have the properties that we need for trust assessment evaluation.
\end{remark}

\noindent
{\bf Assumption 2t.} The bidirectional communication graph induced from $\mathcal{G}$ by restricting attention to the trustworthy nodes, denoted by $\mathcal{G}_{T} = (\mathcal{V}_T, \mathcal{E}_T)$, where $\mathcal{E}_T = \{ (v_j, v_i) \in \mathcal{E} \; | \; v_j, v_i  \in \mathcal{V}_T \}$, is connected. Furthermore, there are no malicious nodes that are neighbors, i.e., for $v_j, v_i \in \mathcal{V}_M$ (recall that $\mathcal{V}_M = \mathcal{V} \setminus \mathcal{V}_T$), we have that $(v_j, v_i) \notin \mathcal{E}$ and $(v_i, v_j) \notin \mathcal{E}$.

In the remainder of this section, unless we explicitly indicate otherwise, a ``transmission" or a ``broadcast" by node $v_j$ indicates that the message is sent to its immediate neighbors (in $\mathcal{N}_j$), whereas a ``two-hop transmission" or a ``two-hop broadcast" by node $v_j$ indicates that the message is sent to all of its two-hop neighbors (in $\mathcal{N}_j^{(2)}$), including its immediate neighbors. Furthermore, when we say that a node $v_i$ is declared malicious by another node $v_j$ at iteration $k$, we mean that $t_{ij}[t] = 0$ for all $t > k$.

\subsection{Two-Hop Information Received at Each Iteration}

Under Assumptions~0t,~1, and~2t, let us consider that, at initialization, each node $v_\ell \in \mathcal{V}$ sends to its neighbors its value $x_\ell[0]=x_\ell$, $\sigma_\ell[0]=0$ and $\mathcal{T}_\ell[-1] = \mathcal{V}$ (the last two are not really needed since its neighbors already expect these values). Subsequently, at the end of each iteration $k$ of Algorithm~1 ($k=0, 1, 2, ...$), each node $v_\ell \in \mathcal{V}$ sends the following information:
\begin{enumerate}
\item[A.] Node $v_\ell$ sends to its (immediate) neighbors in $\mathcal{N}_\ell$:
\begin{enumerate}
\item the set of its trustworthy neighbors $\mathcal{T}_\ell[k]$;
\item its updated value $x_\ell[k+1]$;
\end{enumerate}
\item [B.] Node $v_\ell$ also sends to all of its two-hop neighbors in $\mathcal{N}_\ell^{(2)}$ (including its immediate neighbors):
\begin{enumerate}
\item its updated running sum $\sigma_\ell[k+1]$ (this is sent anyway to the immediate neighbors of node $v_\ell$ when executing Algorithm~1, but here we require that this running sum is sent to all two-hop neighbors as well).
\end{enumerate}
\end{enumerate}

The above communication strategy is followed by all nodes, including each neighbor $v_i$ of node $v_j$ ($v_i \in \mathcal{N}_j$). Thus, at the end of each iteration $k$, each node $v_j$ can check the computations performed by each neighbor $v_i$, as it has access to all information that is needed from node $v_i$ to execute an iteration step. More specifically, node $v_j$ knows: (i) the running sums $\{ \sigma_{l}[k] \; | \; v_{l} \in \mathcal{N}_i \}$, obtained via the two-hop transmissions by nodes $v_l$, $v_l \in \mathcal{N}_i$, at iteration $k-1$ ($v_l \in \mathcal{N}_i$, thus $v_l \in \mathcal{N}_j^{(2)}$); (ii) the value $x_{i}[k]$, obtained via the one-hop transmission by node $v_i$ at iteration $k-1$; (iii) the value $\sigma_{i}[k]$, obtained via the two-hop transmission by node $v_i$ at iteration $k-1$ (this information can be obtained via a one-hop transmission from node $v_i$ to node $v_j$, but it is sent by node $v_i$ to all of its two-hop neighbors, see $B$ above); (iv) the sets $\mathcal{T}_i[k]$ and $\mathcal{T}_i[k-1]$ are obtained via the one-hop transmissions by node $v_{i}$ at iterations $k$ and $k-1$. Table~\ref{TABinfojconcurrent} summarizes the information received at node $v_j$ at iterations $k-1$ and $k$ from neighbor $v_i$ and from a generic neighbor of node $v_i$, denoted by $v_l$ (note that $v_l$ is a two-hop neighbor of node $v_j$). Let us reemphasize that $\sigma_i[k]$ and $\sigma_i[k+1]$ could have been obtained via one-hop transmissions from node $v_i$ to node $v_j$; however, node $v_i$ also sends this information to all of its two-hop neighbors (following $B$ above). Thus, we treat it as two-hop information.

\begin{table}[htb]
\caption{Information received at $v_j$ for Concurrent Checking} \label{TABinfojconcurrent}
\begin{center}
\begin{tabular} {|c||c|c||c|}
\hline
 Iteration & One-hop $v_i$ & Two-hop $v_i$ & Two-hop $v_l$ \\
\hline \hline
$k-1$ & $\mathcal{T}_i[k-1]$ & $\sigma_i[k]$ & $\sigma_l[k]$ \\
          & $x_i[k]$ & & \\      				
\hline
$k$ & $\mathcal{T}_i[k]$ & $\sigma_i[k+1]$ & $\sigma_l[k+1]$ \\
       & $x_i[k+1]$ & & \\      				
\hline
\end{tabular}
\end{center}
\end{table}

\subsubsection{Concurrent Checking}

Node $v_j$ can directly assess whether node $v_{i}$ has correctly updated its values $x_{i}[k+1]$ and $\sigma_{i}[k+1]$ (by comparing the values it calculates against those broadcasted by node $v_i$ at the current iteration). More specifically, based on the information it has available (refer to Table~\ref{TABinfojconcurrent}), node $v_j$ performs the following parity checks at the end of iteration $k$:
\begin{eqnarray}
p_{ij}[k] & = & x_{i}[k+1] - \widehat{x}_i[k+1] \; , \label{EQpar1} \\ 
q_{ij}[k] & = & \sigma_{i}[k+1] - ( \sigma_i[k] + x_i[k]) \label{EQpar2} \; ,
\end{eqnarray}

\noindent 
where $\widehat{x}_i[k+1]$ is calculated by node $v_j$ as follows:
$$ 
\begin{array}{rcl}
\widehat{x}_i[k+1] & = & \left ( 1 - \frac{D_i[k]}{N} \right ) x_i[k] + \\
 & & ~ + \frac{1}{N} \sum_{v_l \in \mathcal{N}_i} (\mu_{il}[k+1] - \mu_{il}[k]) + \\ 
 & & ~ + (| \Delta \mathcal{U}_{i}[k] | - | \Delta \mathcal{T}_i[k] |) \frac{1}{N} \sigma_i[k] \; ,
\end{array}
$$
with $D_i[k] = | \mathcal{N}_i[k] |$, $\mathcal{N}_i[k] = \mathcal{N}_i \cap \mathcal{T}_i[k]$, $\Delta \mathcal{U}_i[k] = \mathcal{N}_i \cap (\mathcal{T}_i[k-1] \setminus \mathcal{T}_i[k])$, and $\Delta \mathcal{T}_i[k] = \mathcal{N}_i \cap (\mathcal{T}_i[k] \setminus \mathcal{T}_i[k-1])$. Note that $\mathcal{T}_i[k]$ and $\mathcal{T}_i[k-1]$ are reported to node $v_j$ directly by node $v_i$ and the $\mu_{il}$'s can be obtained from $\sigma_l$'s (just like node $v_i$ would obtain them):
$$
\mu_{il}[k+1]  = \left \{ 
\begin{array}{ll} 
\sigma_l[k+1] \;, & \forall v_l \in \mathcal{N}_i[k] , \\
0 \;, & \text{otherwise.}
\end{array} \right .
$$

Effectively, node $v_j$ checks the computation of node $v_i$ based on information reported by the neighbors of node $v_i$ and node $v_i$ itself. It is worth pointing out that the information from the neighbors of node $v_i$ is the same information as the one node $v_i$ is using. Thus, if node $v_j$ discovers that $p_{ij}[k] \neq 0$ and/or $q_{ij}[k] \neq 0$, then it can safely declare node $v_i$ as malicious (i.e., set $t_{ij}[t] = 0$ for all $t > k$). In fact, all other neighbors of node $v_i$ (not just node $v_j$) will also discover at iteration $k$ that node $v_i$ is misbehaving because their parity checks will evaluate to the same values as the ones in \eqref{EQpar1}--\eqref{EQpar2} (since they are based on identical information). This means that all neighbors of node $v_i$, which are all trustworthy under Assumption~2t, will declare node $v_i$ malicious at iteration $k$; this effectively isolates node $v_i$ from the remainder of the network.

\subsubsection{Parity Check Sufficiency}

The previous section described necessary conditions that need to be satisfied for node $v_i$ to be considered trustworthy by node $v_j$ (namely, the parity checks $p_{ij}[k]$ and $q_{ij}[k]$ need to be zero at each iteration $k$). Under some mild conditions, we argue in this section that checking these two parity checks is also sufficient.

One interesting case is the following: the above scheme allows node $v_i$ to try to manipulate the distributed computation by declaring one (or more) of its neighbors, say node $v_{l'}$, $v_{l'} \in \mathcal{N}_i$, as untrustworthy and then performing the correct computation (i.e., report the correct values under the assumption that node $v_{l'}$ is untrustworthy). Of course, this restricts how node $v_i$ can affect the outcome of the computation as it cannot arbitrarily change its values, but it still gives node $v_i$ a finite number of ways in which to alter the outcome of its computation (by including or excluding one of the $2^{D_i}$ different subsets\footnote{Node $v_i$ has $D_i$ neighbors and can declare one or more of them as untrustworthy; it has $2^{D_i}$ different ways of declaring its $D_i$ neighbors as trustworthy or untrustworthy.} of its neighbors via the set $\mathcal{T}_i[k]$ that it is reporting). Note that in such case, node $v_{l'}$, and any other (trustworthy) neighbor of node $v_i$ that might have been unfairly declared untrustworthy by node $v_i$, will immediately realize that node $v_i$ is untrustworthy (under Assumption~0t, node $v_{l'}$ becomes aware that node $v_i$ is declaring it untrustworthy since $v_{l'}$ also receives the set $\mathcal{T}_i[k]$). Thus, we assume that, if node $v_{l'}$, which is trustworthy is declared untrustworthy by node $v_i$, then node $v_{l'}$ will immediately declare node $v_i$ as untrustworthy. A potential problem, however, is the fact that the other neighbors of node $v_i$ (including node $v_j$) may not necessarily be in position to declare node $v_i$ as untrustworthy (as they may have no direct knowledge of the trustworthiness of node $v_{l'}$). In such case, the untrustworthy node $v_i$ is not removed but the combined actions of nodes $v_i$ and $v_{l'}$ effectively remove the link between nodes $v_i$ and $v_{l'}$. Note that the graph remains connected (recall that, under Assumption~2t, the graph $\mathcal{G}_{T}$ is assumed to be connected even if untrustworthy node $v_i$ and {\em all} of its links, not just the link between nodes $v_i$ and $v_{l'}$, are removed). Therefore, node $v_i$'s initial value will be included in the computation of the average, but node $v_i$ will not be able to affect the average computation in any other way. This is legitimate behavior under our assumptions (effectively node $v_i$ is not behaving maliciously, other than forcing a link in the graph to be removed). Note that node $v_i$'s initial value will be included in the average calculation (unless node $v_i$ misbehaves in a different manner).

In the above scenario, if we require that trustworthy nodes need to have $\mathcal{T}_i[k+1] \subseteq \mathcal{T}_i[k]$ for all $k$, node $v_i$ will easily be identified as untrustworthy if it tries to change the status of its neighbor $v_{l'}$ from trustworthy to untrustworthy and back  (e.g., in order to delay the convergence of the algorithm). Note, however, that Assumption~3 did not require this monotonicity: as long as, for $k > k_{\max}$, the sets $\mathcal{T}_j[k] = \mathcal{V}_T$ for all trustworthy nodes $v_j \in \mathcal{V}_T$,  Algorithm~\ref{algorithm_0} will allow nodes to converge to the trustworthy average.

The above distributed scheme allows nodes to eventually identify their untrustworthy neighbors and remove them from the computation. Since the graph $\mathcal{G}_{T}$ (after removing all untrustworthy nodes and the links associated with them) remains connected, the trustworthy nodes will eventually compute the trustworthy average. Untrustworthy nodes can slow down the computation (by removing links with neighboring nodes by declaring (unfairly) one or more of their trustworthy neighbors as untrustworthy) but cannot affect the computation in any other way (other than including their own initial value in the calculation, which is a legitimate action to take). Untrustworthy nodes can also sacrifice themselves to delay convergence to the average (e.g., by behaving correctly to remain in the computation and then, at some later time step, misbehave in order to cause a disturbance in the computation). 

\begin{example}
We use this example to point out that if there are untrustworthy nodes that are neighbors (i.e., if Assumption~2t is violated), then the concurrent checking scheme described in this section is not guaranteed to capture the trustworthy nodes. Consider a bidirectional line graph of six nodes, $v_1$, $v_2$, $v_3$, ..., $v_6$, that are connected via edges between $v_1$ and $v_2$, between $v_2$ and $v_3$, ..., and between $v_5$ and $v_6$. If nodes $v_5$ and $v_6$ are untrustworthy, the induced graph involving trustworthy nodes is connected. However, since trustworthy nodes are only aware of their two-hop neighborhood, it is not possible for them to capture violations of the untrustworthy nodes: node $v_5$ can collude with node $v_6$ to pretend that any changes it incorporates in the computation are coming from neighbor $v_6$ (which cannot be checked by any trustworthy node). In other words, node $v_5$ can act correctly so that when node $v_4$ performs its checks does not identify any problems, at least based on what node $v_6$ is reporting; however, the only node that can check what node $v_6$ reports is node $v_5$, which is itself malicious.
\end{example}

\subsection{Two-Hop Information Received Infrequently}

In this section, we perform trust evaluations, similar to the ones developed in the previous section, but infrequently (more precisely, at random instants of time). The fact that trust evaluations are performed infrequently does not compromise the ability of nodes to capture misbehavior by their neighboring nodes (even when this misbehavior occurs in between checks) because we utilize the invariant established in Theorem~\ref{THEinvariant} about Algorithm~\ref{algorithm_0} in a way that guarantees that even if a node manipulates its values at an iteration during which it is not checked by any of its neighbors, its misbehavior (if significant\footnote{Misbehavior by node $v_i$ is insignificant if it does not change the average value the nodes converge to.}) will be captured when it is next checked by its neighbors. 

Under Assumptions~0t,~1, and~2t, let us consider that, at initialization, each node $v_\ell$ sends to its (immediate) neighbors the values $x_\ell[0]=x_\ell$, $\sigma_\ell[0]=0$, and $\mathcal{T}_\ell[-1] = \mathcal{V}$. Subsequently, at the end of each iteration $k$ of Algorithm~1 ($k=0, 1, 2, ...$), each node $v_\ell$ sends to its (immediate) neighbors its values $x_\ell[k+1]$, $\sigma_\ell[k+1]$, and $\mathcal{T}_\ell[k]$. At each iteration $k$, node $v_\ell$ also saves its previous $\sigma_\ell[k]$ value. 

The above communication strategy is followed by all nodes, including each neighbor $v_i$ of node $v_j$ ($v_i \in \mathcal{N}_j$). We next describe the checking performed from the point of view of node $v_j$ (each node does something similar). At random points in time that node $v_j$ selects, node $v_j$ initiates a check of all of the nodes in its neighborhood. More specifically, at the end of a randomly selected iteration, denoted here by $K$, node $v_j$ requests and receives additional information from its two-hop neighbors in $\mathcal{N}_j^{(2)} := \mathcal{N}_j \cup (\cup_{v_i \in \mathcal{N}_j} \mathcal{N}_i)$. At this point, each two-hop neighbor $v_l \in \mathcal{N}_j^{(2)}$ sends to node $v_j$ the following information: 
\begin{enumerate}
\item its current running sum $\sigma_l[K+1]$, and
\item its previous running sum $\sigma_l[K]$ (which is information that each node stores for one time step).
\end{enumerate}

Table~\ref{TABinfojnonconcurrent} summarizes the information received at node $v_j$ at iterations $K-1$ and $K$ from neighbor $v_i$ and from a generic neighbor of node $v_i$, denoted by $v_l$ (note that $v_l$ is a two-hop neighbor of node $v_j$). We point out that two-hop information is only sent at iteration $K$ (when a check is initiated by node $v_j$) and involves the running sums $\sigma_l[K]$ (one iteration step ago) and $\sigma_l[K+1]$ (current iteration) for each $v_l \in \mathcal{N}_j^{(2)}$. Note that we denote these running sums by $\widetilde{\sigma}_l[K]$ and $\widetilde{\sigma}_l[K+1]$ because they could be different from the actual runnings sum $\sigma_l[K]$ and $\sigma_l[K+1]$ (the latter were sent by node $v_l$ to its one-hop neighbors at iterations $K-1$ and $K$ respectively). In particular, if node $v_l$ is malicious, we could have $\widetilde{\sigma}_l[K] \neq \sigma_l[K]$ and/or $\widetilde{\sigma}_l[K+1] \neq \sigma_l[K+1]$.

\begin{table}[htb]
\caption{Information received at $v_j$ for Non-Concurrent Checking} \label{TABinfojnonconcurrent}

\begin{center}
\begin{tabular} {|c||c||c|}
\hline
 Iteration & One-hop $v_i$ & Two-hop $v_l$ \\
\hline \hline
$K-1$ & $\mathcal{T}_i[K-1]$, $\sigma_i[K]$, & \\ 
           & $x_i[K]$ & \\      				
\hline
$K$ & $\mathcal{T}_i[K]$, $\sigma_i[K+1]$, & $\widetilde{\sigma}_l[K]$, \\
        & $x_i[K+1]$ & $\widetilde{\sigma}_l[K+1]$ \\
\hline
\end{tabular}
\end{center}

\end{table}

\subsubsection{Infrequent Checking}

At the end of iteration $K$, node $v_j$ can check the computations performed by each neighbor $v_i$ ($v_i \in \mathcal{N}_j$), as it has access to all information that is needed from node $v_i$ to execute an iteration step. More specifically, node $v_j$ knows: (i) the running sums $\{ \widetilde{\sigma}_{l}[K] \; | \; v_{l} \in \mathcal{N}_i \}$, obtained via the two-hop transmissions by each node $v_l$ in $\mathcal{N}_i$ at iteration $K$ ($v_l \in \mathcal{N}_i$, thus $v_l \in \mathcal{N}_j^{(2)}$); (ii) the value $x_{i}[K]$, obtained via the one-hop transmission by node $v_i$ at iteration $K-1$; (iii) the value $\sigma_{i}[K]$, obtained via the one-hop transmission by node $v_i$ at iteration $K-1$; (iv) the sets $\mathcal{T}_i[K-1]$ and $\mathcal{T}_i[K]$, obtained via the one-hop transmissions of node $v_{i}$ at iterations $K-1$ and $K$, respectively.

Therefore, node $v_j$ can directly assess whether node $v_{i}$ has correctly updated its values $x_{i}[K+1]$ and $\sigma_{i}[K+1]$ (by comparing the values it calculates against those broadcasted by node $v_i$ at the current iteration). More specifically, based on the information it has available (refer to Table~\ref{TABinfojnonconcurrent}), node $v_j$ performs the following parity checks:
\begin{eqnarray}
p_{ij}[K] & = & x_{i}[K+1] - \widehat{x}_i[K+1] \; , \label{EQpar1n} \\ 
q_{ij}[K] & = & \sigma_{i}[K+1] - ( \sigma_i[K] + x_i[K]) \label{EQpar2n} \; ,
\end{eqnarray}
where $\widehat{x}_i[K+1]$ is calculated by node $v_j$ as 
$$ 
\begin{array}{rcl}
\widehat{x}_i[K+1] & = & \left ( 1 - \frac{D_i[K]}{N} \right ) x_i[K] \\ 
 & & ~ + \frac{1}{N} \sum_{v_l \in \mathcal{N}_i} (\widetilde{\mu}_{il}[K+1] - \widetilde{\mu}_{il}[K]) + \\ 
 & & ~ + (| \Delta \mathcal{U}_{i}[K] | - | \Delta \mathcal{T}_i[K] |) \frac{1}{N} \sigma_i[K] \; ,
\end{array}
$$
with $D_i[k] = | \mathcal{N}_i[k] |$, $\mathcal{N}_i[k] = \mathcal{N}_i \cap \mathcal{T}_i[k]$, for $k=K-1$ and $k=K$, and $\Delta \mathcal{U}_i[K] = \mathcal{N}_i \cap (\mathcal{T}_i[K-1] \setminus \mathcal{T}_i[K])$, and $\Delta \mathcal{T}_i[K] = \mathcal{N}_i \cap (\mathcal{T}_i[K] \setminus \mathcal{T}_i[K-1])$. Note that $\mathcal{T}_i[K]$ and $\mathcal{T}_i[K-1]$ are reported to node $v_j$ directly by node $v_i$, but the $\widetilde{\mu}_{il}$'s are obtained from $\widetilde{\sigma}_l$'s as follows:
$$
\widetilde{\mu}_{il}[k+1]  = \left \{ 
\begin{array}{ll} 
\widetilde{\sigma}_l[k+1] \;, & \forall v_l \in \mathcal{N}_i[k] , \\
0 \;, & \text{otherwise,}
\end{array} \right .
$$
for $k=K-1$ and $k=K$.

In addition, node $v_j$ checks that the invariant in \eqref{EQinvariant} holds for node $v_i$ by performing the following check:
\begin{eqnarray}
r_{ij}[K] & = & x_{i}[K+1] - x_i + \frac{D_i[K]}{N} \sigma_i[K+1] - \nonumber \\ 
 & & ~ - \sum_{v_l \in \mathcal{N}_i[K]} \widetilde{\sigma}_l[K+1] \; , \label{EQpar3n}
\end{eqnarray}
where $x_i$ became available to node $v_j$ at the initialization of the algorithm.

Finally, node $v_j$ checks whether the running sums transmitted by node $v_i$ to node $v_i$'s neighbors and to node $v_i$'s two-hop neighbors are in agreement (i.e., $\widetilde{\sigma}_i[K] \neq \sigma_i[K]$ and/or $\widetilde{\sigma}_i[K+1] \neq \sigma_i[K+1]$); this can be done via the following parity check: 
\begin{equation}
s_{ij}[K] = | \widetilde{\sigma}_i[K] - \sigma_i[K] | + | \widetilde{\sigma}_i[K+1] - \sigma_i[K+1] | \; . \label{EQpar4n}
\end{equation}

We assume that, when node $v_j$ initiates a check at iteration $K$ for node $v_i$ (and all of node $v_j$'s neighbors), all other neighbors of node $v_i$ (not just node $v_j$) will also perform at iteration $K$ the exact same checks in \eqref{EQpar1n}--\eqref{EQpar4n} as node $v_j$ (this is possible because they receive information similar to the one received by node $v_j$). For example, node $v_l$ will also be able to check node $v_i$ though it will not necessarily be in position to check its other neighbors. This means that all neighbors of node $v_i$ will reach a decision regarding the trustworthiness of node $v_i$; as we explain next, however, the decisions of the neighbors of node $v_i$ regarding the status of node $v_i$ need not coincide (though they will eventually coincide).

\subsubsection{Parity Check Analysis}

Node $v_j$ checks the computation of node $v_i$ based on information reported by the neighbors of node $v_i$ and node $v_i$ itself. If node $v_j$ discovers that $q_{ij}[K] \neq 0$ or $s_{ij}[K] \neq 0$, then it can safely declare node $v_i$ to be malicious, i.e., set $t_{ij}[t] = 0$ for all $t > K$, because $q_{ij}[K]$ and $s_{ij}[K]$ are computed purely on information provided by node $v_i$ and one of them being nonzero is an indication that node $v_i$ has provided inconsistent information). However, if node $v_j$ discovers that $p_{ij}[K] \neq 0$ and/or $r_{ij}[K] \neq 0$, then it cannot safely declare node $v_i$ as malicious. Then, there are two cases to be considered: 
\\ Case 1: node $v_i$ is malicious; and/or 
\\ Case 2: one or more neighbors of node $v_i$, say node $v_l$, has reported an incorrect running sum (i.e., $\widetilde{\sigma}_l[K] \neq \sigma_l[K]$ and/or $\widetilde{\sigma}_l[K+1] \neq \sigma_l[K+1]$).

We first discuss Case 2, which is more straightforward. If node $v_{l}$ is sending to its two-hop neighbors a different running sum than the one that it sent to its one-hop neighbors (i.e., $\widetilde{\sigma}_l[K] \neq \sigma_l[K]$ and/or $\widetilde{\sigma}_l[K+1] \neq \sigma_l[K+1]$), this will prompt all (trustworthy) neighbors of node $v_{l}$ (including node $v_i$, if trustworthy) to declare node $v_{l}$ as untrustworthy. This effectively removes node $v_{l}$ from the computation (under Assumption~2t, all neighbors of node $v_{l}$ are trustworthy and will set their trust assessment about node $v_{l}$ to zero). This includes node $v_i$ itself because $v_i$ will also realize that its neighboring node $v_{l}$ is acting maliciously (reporting mismatched running sums, which will manifest itself as a violation of the fourth parity check, i.e., $s_{li}[K] \neq 0$). This means that, for subsequent iterations $k>K$, $\mathcal{T}_i[k]$ will {\em not} include node $v_{l}$.

Note that, at iteration $K$, node $v_j$ cannot be certain about the status of node $v_i$ (because it cannot discriminate between Case~1 and Case~2). For this reason, node $v_j$ (as well as all other trustworthy nodes that are neighbors of node $v_i$) will consider node $v_i$ to be ``possibly untrustworthy," which means that it might be (permanently) declared untrustworthy when these nodes perform their next checks. In particular, after the check at iteration $K$, node $v_j$ expects node $v_i$ (if trustworthy) to remove at least one node from its computation, i.e., to report a set $\mathcal{T}_i[K+1]$ that is strictly contained in $\mathcal{T}_i[K]$. If that does not happen at the next iteration, then node $v_j$ can safely declare node $v_i$ to be untrustworthy. Note that node $v_j$ does not necessarily know which neighbor of node $v_i$ might be untrustworthy (it is simply aware that there is a disagreement in terms of what node $v_i$ and its neighbors are reporting); however, node $v_j$ is expecting node $v_i$, if trustworthy, to remove at least one of its neighboring nodes from the set $\mathcal{T}_i[K+1]$ (otherwise, Case~1 holds and node $v_j$ declares node $v_i$ untrustworthy at iteration $K+1$).

Note that node $v_j$ can also easily check whether or not $\mathcal{T}_i[k+1] \subseteq \mathcal{T}_i[k]$, which should hold for all $k$ (as it receives this information at each iteration). In particular, if node $v_i$ was considered to be ``possibly untrustworthy" (due to the check at iteration $K$ when node $v_j$ last initiated two-hop information exchange), then strict inequality $\mathcal{T}_i[K+1] \subset \mathcal{T}_i[K]$ should hold. It is possible for a node $v_i$ that was deemed ``possibly untrustworthy" by node $v_j$ to present a $\mathcal{T}_{i}[K+1]$ that is strictly contained in $\mathcal{T}_{i}[K]$ (and thus become trustworthy), but node $v_j$ to identify another inconsistency at a subsequent check, say at iteration $K'$. In such case, node $v_j$ again considers node $v_i$ to be ``possibly untrustworthy" because Case~1 is not certain (it could be that Case~2 holds for a different neighbor $v_{l'}$ of node $v_{i}$). Of course, the procedure continues because at iteration $K'+1$ node $v_j$ expects $\mathcal{T}_{i}[K'+1]$ to be reduced even further (otherwise, node $v_j$ can safely declare node $v_i$ as untrustworthy). Note that this can only happen a finite number of times (at most as many as the number of neighbors of node $v_i$).

\subsubsection{Parity Check Sufficiency}

The reason one also needs to check the invariant in \eqref{EQinvariant} (via $r_{ij}$) is because an untrustworthy node $v_i$ has the opportunity to change its $x$ and $\sigma$ values arbitrarily during the time steps when it is not being checked. Of course, if it attempts such a change, $v_i$ is risking getting caught by node $v_j$ (or by any other of its trustworthy neighbors) because the latter may randomly decide to request two-hop information. Nevertheless, if node $v_i$ takes the risk at one iteration and does not get caught, node $v_j$ will not be able to determine this misbehavior if it only checks the computational updates at later iterations (because those updates might be correctly performed by node $v_i$ and the damage has already been done). Checking the invariant ensures that node $v_i$ will get caught at a later iteration (unless its invariant is the proper one, which guarantees that node $v_i$ is contributing the correct initial value).

In the above scenario, it is possible for node $v_i$ to misbehave multiple times during checks and not get caught if it somehow manages to present the correct invariant when the checks are taking place. In fact, if the invariant for node $v_i$ is correct,  Theorem~\ref{THEmain} effectively implies that node $v_i$ is behaving correctly (we know that nodes will converge to the correct average). Therefore, node $v_i$ can indeed manipulate its updates in-between checks without getting caught if it somehow manages to report $x$ and $\sigma$ values that satisfy the update check and the invariant check (for instance, it might attempt to do that in order to delay convergence); however, such manipulations cannot happen indefinitely because node $v_i$ will eventually get caught, since, due to the randomness of the checks of node $v_j$, node $v_i$ will be identified as untrustworthy with probability one.

\begin{remark}
As in the previous section, the above scheme allows node $v_i$ to try to manipulate the average by declaring one (or more) of its neighbors, say node $v_{l'}$, as untrustworthy and then performing the correct computation (i.e., report the correct values under the assumption that node $v_{l'}$ is untrustworthy). This will effectively remove the link $(v_i, v_{l'})$ and can only happen a finite number of times. Under such strategy, node $v_i$ manages to incorporate its value in the calculation, but cannot affect the average in any other way. Furthermore, if node $v_i$ its trustworthy, the set $\mathcal{T}_i[k+1]$ is a subset of $\mathcal{T}_i[k]$ for all $k$, and this is something that can be verified by its neighbors at each iteration. Note that $\mathcal{T}_i[k+1] \subseteq \mathcal{T}_i[k]$ for all $k$ is also a feature of the scheme proposed in \cite{yuan2019resilient, yuan2021resilient, yuan2021secure}. However, in our case, node $v_j$ may assign the status ``possibly untrustworthy" to its neighbor $v_i$, while waiting for node $v_i$ to determine any untrustworthy neighbors it might have.
\end{remark}

\section{Conclusions and Future Work}
\label{SECconclusions}

In this paper, we have considered the problem of trustworthy distributed average consensus in multi-agent systems, in the presence of malicious nodes that may try to influence the outcome of the computation via their updates that might be chosen in a collusive manner. The proposed algorithm allows the nodes to asymptotically converge to the average of the initial values of the trustworthy nodes, assuming that  (i) the underlying bidirectional communication topology that describes the information exchange among the non-malicious nodes is connected, and (ii) the non-malicious nodes eventually receive correct information about the trustworthiness of other nodes. The proposed algorithm allows the nodes to continuously adjust their values and updating strategy as they receive new information about the trustworthiness of other nodes; assuming that eventually this information correctly represents the trustworthiness of the various nodes in the distributed system, the non-malicious nodes asymptotically converge to the average of the initial values of the trustworthy nodes. When the nodes are capable of performing (perhaps periodically and at a higher cost) two-hop communication transmissions, we have also proposed strategy for the nodes to perform checks and obtain the trust assessments, in a way that guarantees that all malicious nodes are eventually identified by the trustworthy nodes, effectively ensuring convergence to the trustworthy average.

In our future work we plan to further research into mechanisms for distributively obtaining the trust assessments that are required by Algorithm~1 (e.g., in directed communication topologies) and/or relaxing the topological requirements we imposed (e.g., consider cases where neighboring nodes may be malicious). We also plan to consider dynamic versions of the above problem where nodes can update the initial measurement, or enter and leave the distributed system.

\bibliographystyle{IEEEtran}

\bibliography{references,bibliografia_consensusV4}

\begin{thebibliography}{10}
\providecommand{\url}[1]{#1}
\csname url@samestyle\endcsname
\providecommand{\newblock}{\relax}
\providecommand{\bibinfo}[2]{#2}
\providecommand{\BIBentrySTDinterwordspacing}{\spaceskip=0pt\relax}
\providecommand{\BIBentryALTinterwordstretchfactor}{4}
\providecommand{\BIBentryALTinterwordspacing}{\spaceskip=\fontdimen2\font plus
\BIBentryALTinterwordstretchfactor\fontdimen3\font minus
  \fontdimen4\font\relax}
\providecommand{\BIBforeignlanguage}[2]{{%
\expandafter\ifx\csname l@#1\endcsname\relax
\typeout{** WARNING: IEEEtran.bst: No hyphenation pattern has been}%
\typeout{** loaded for the language `#1'. Using the pattern for}%
\typeout{** the default language instead.}%
\else
\language=\csname l@#1\endcsname
\fi
#2}}
\providecommand{\BIBdecl}{\relax}
\BIBdecl

\bibitem{1984:Tsitsiklis}
J.~Tsitsiklis, ``Problems in decentralized decision making and computation,''
  Ph.D. dissertation, Massachusetts Institute of Technology, Cambridge, MA,
  1984.

\bibitem{1996:Lynch}
N.~A. Lynch, \emph{Distributed Algorithms}.\hskip 1em plus 0.5em minus
  0.4em\relax San Mateo: CA: Morgan Kaufmann Publishers, 1996.

\bibitem{2003:Koetter}
R.~Koetter and M.~M\'edard, ``An algebraic approach to network coding,''
  \emph{IEEE/ACM Trans. on Networking}, vol.~11, no.~5, pp. 782--795, October
  2003.

\bibitem{2004:Murray}
R.~Olfati-Saber and R.~Murray, ``Consensus problems in networks of agents with
  switching topology and time-delays,'' \emph{{IEEE Trans. on Automatic
  Control}}, vol.~49, no.~9, pp. 1520--1533, September 2004.

\bibitem{2005:Hromkovic}
J.~Hromkovic, R.~Klasing, A.~Pelc, P.~Ruzicka, and W.~Unger,
  \emph{Dissemination of Information in Communication Networks}.\hskip 1em plus
  0.5em minus 0.4em\relax Springer-Verlag, 2005.

\bibitem{2008:Cortes}
J.~Cort\'{e}s, ``Distributed algorithms for reaching consensus on general
  functions,'' \emph{Automatica}, vol.~44, no.~3, pp. 726--737, March 2008.

\bibitem{2003:jadbabaie_coordination}
A.~Jadbabaie, J.~Lin, and A.~Morse, ``Coordination of groups of mobile
  autonomous agents using nearest neighbor rules,'' \emph{{IEEE} Trans. on
  Automatic Control}, vol.~48, no.~6, pp. 988--1001, June 2003.

\bibitem{cao2008reaching}
M.~Cao, A.~S. Morse, and B.~D.~O. Anderson, ``Reaching a consensus in a
  dynamically changing environment: A graphical approach,'' \emph{SIAM Journal
  on Control and Optimization}, vol.~47, no.~2, pp. 575--600, 2008.

\bibitem{2009:Olshevsky}
A.~Olshevsky and J.~N. Tsitsiklis, ``Convergence speed in distributed consensus
  and averaging,'' \emph{SIAM Journal on Control and Optimization}, vol.~48,
  no.~1, pp. 33--55, February 2009.

\bibitem{sundaram2007finite}
S.~Sundaram and C.~N. Hadjicostis, ``Finite-time distributed consensus in
  graphs with time-invariant topologies,'' in \emph{Proc. of 2007 American
  Control Conference}, 2007, pp. 711--716.

\bibitem{wang2010finite}
L.~Wang and F.~Xiao, ``Finite-time consensus problems for networks of dynamic
  agents,'' \emph{IEEE Trans. on Automatic Control}, vol.~55, no.~4, pp.
  950--955, April 2010.

\bibitem{yuan2013decentralised}
Y.~Yuan, G.-B. Stan, L.~Shi, M.~Barahona, and J.~Goncalves, ``Decentralised
  minimum-time consensus,'' \emph{Automatica}, vol.~49, no.~5, pp. 1227--1235,
  2013.

\bibitem{hendrickx2014finite}
J.~M. Hendrickx, G.~Shi, and K.~H. Johansson, ``Finite-time consensus using
  stochastic matrices with positive diagonals,'' \emph{IEEE Trans. on Automatic
  Control}, vol.~60, no.~4, pp. 1070--1073, April 2014.

\bibitem{kashyap2007quantized}
A.~Kashyap, T.~Ba{\c{s}}ar, and R.~Srikant, ``Quantized consensus,''
  \emph{Automatica}, vol.~43, no.~7, pp. 1192--1203, 2007.

\bibitem{aysal2008distributed}
T.~C. Aysal, M.~J. Coates, and M.~G. Rabbat, ``Distributed average consensus
  with dithered quantization,'' \emph{IEEE Trans. on Signal Processing},
  vol.~56, no.~10, pp. 4905--4918, October 2008.

\bibitem{2009:Nedic}
A.~Nedic, A.~Olshevsky, A.~Ozdaglar, and J.~Tsitsiklis, ``On distributed
  averaging algorithms and quantization effects,'' \emph{IEEE Trans. on
  Automatic Control}, vol.~54, no.~11, pp. 2506--2517, November 2009.

\bibitem{lavaei2011quantized}
J.~Lavaei and R.~M. Murray, ``Quantized consensus by means of gossip
  algorithm,'' \emph{IEEE Trans. on Automatic Control}, vol.~57, no.~1, pp.
  19--32, January 2011.

\bibitem{cai2011quantized}
K.~Cai and H.~Ishii, ``Quantized consensus and averaging on gossip digraphs,''
  \emph{IEEE Trans. on Automatic Control}, vol.~56, no.~9, pp. 2087--2100,
  September 2011.

\bibitem{seyboth2013event}
G.~S. Seyboth, D.~V. Dimarogonas, and K.~H. Johansson, ``Event-based
  broadcasting for multi-agent average consensus,'' \emph{Automatica}, vol.~49,
  no.~1, pp. 245--252, 2013.

\bibitem{nowzari2019event}
C.~Nowzari, E.~Garcia, and J.~Cort{\'e}s, ``Event-triggered communication and
  control of networked systems for multi-agent consensus,'' \emph{Automatica},
  vol. 105, pp. 1--27, 2019.

\bibitem{rikos2020event}
A.~I. Rikos and C.~N. Hadjicostis, ``Event-triggered quantized average
  consensus via ratios of accumulated values,'' \emph{IEEE Trans. on Automatic
  Control}, vol.~66, no.~3, pp. 1293--1300, March 2020.

\bibitem{2018:BOOK}
C.~N. Hadjicostis, A.~D. Dom{\'\i}nguez-Garc{\'\i}a, and T.~Charalambous,
  ``Distributed averaging and balancing in network systems, with applications
  to coordination and control,'' \emph{Foundations and Trends\textregistered
  ~in Systems and Control}, vol.~5, no. 3--4, 2018.

\bibitem{CDCpaper}
C.~N. Hadjicostis and A.~D. Dom{\'\i}nguez-Garc{\'\i}a, ``Trustworthy
  distributed average consensus,'' in \emph{Proc. of the 61st IEEE Conference
  on Decision and Control (CDC)}, 2022, pp. 7403--7408.

\bibitem{nitinTAC2016}
C.~N. Hadjicostis, N.~H. Vaidya, and A.~D. Dom{\'\i}nguez-Garc{\'\i}a, ``Robust
  distributed average consensus via exchange of running sums,'' \emph{IEEE
  Trans. on Automatic Control}, vol.~61, no.~6, pp. 1492--1507, June 2016.

\bibitem{eyal2014limosense}
I.~Eyal, I.~Keidar, and R.~Rom, ``{LiMoSense:} live monitoring in dynamic
  sensor networks,'' \emph{Distributed Computing}, vol.~27, no.~5, pp.
  313--328, 2014.

\bibitem{zhu2010discrete}
M.~Zhu and S.~Mart{\'\i}nez, ``Discrete-time dynamic average consensus,''
  \emph{Automatica}, vol.~46, no.~2, pp. 322--329, 2010.

\bibitem{montijano2014robust}
E.~Montijano, J.~I. Montijano, C.~Sag{\"u}{\'e}s, and S.~Mart{\'\i}nez,
  ``Robust discrete time dynamic average consensus,'' \emph{Automatica},
  vol.~50, no.~12, pp. 3131--3138, 2014.

\bibitem{yemini2021characterizing}
M.~Yemini, A.~Nedic, A.~J. Goldsmith, and S.~Gil, ``Characterizing trust and
  resilience in distributed consensus for cyberphysical systems,'' \emph{IEEE
  Trans. on Robotics}, vol.~38, no.~1, pp. 71--91, 2021.

\bibitem{yuan2019resilient}
L.~Yuan and H.~Ishii, ``Resilient consensus with distributed fault detection,''
  \emph{IFAC-PapersOnLine}, vol.~52, no.~20, pp. 285--290, 2019.

\bibitem{yuan2021resilient}
------, ``Resilient consensus with multi-hop communication,'' in \emph{Proc. of
  the 60th IEEE Conference on Decision and Control (CDC)}, 2021, pp.
  2696--2701.

\bibitem{yuan2021secure}
------, ``Secure consensus with distributed detection via two-hop
  communication,'' \emph{Automatica}, vol. 131, p. 109775, 2021.

\bibitem{agkun2022}
\BIBentryALTinterwordspacing
O.~E. Akgün, A.~K. Dayı, S.~Gil, and A.~Nedić, ``Learning trust over
  directed graphs in multiagent systems (extended version),'' 2022. [Online].
  Available: \url{https://arxiv.org/abs/2212.02661}
\BIBentrySTDinterwordspacing

\bibitem{sundaram2011distributed}
S.~Sundaram and C.~N. Hadjicostis, ``Distributed function calculation via
  linear iterative strategies in the presence of malicious agents,'' \emph{IEEE
  Trans. on Automatic Control}, vol.~56, no.~7, pp. 1495--1508, July 2011.

\bibitem{leblanc2012consensus}
H.~J. LeBlanc, H.~Zhang, S.~Sundaram, and X.~Koutsoukos, ``Consensus of
  multi-agent networks in the presence of adversaries using only local
  information,'' in \emph{Proc. of the 1st International Conference on High
  Confidence Networked Systems}, 2012, pp. 1--10.

\bibitem{2004:XiaoBoyd}
L.~Xiao and S.~Boyd, ``Fast linear iterations for distributed averaging,''
  \emph{Systems and Control Letters}, vol.~53, no.~1, pp. 65--78, September
  2004.

\bibitem{gil2017guaranteeing}
S.~Gil, S.~Kumar, M.~Mazumder, D.~Katabi, and D.~Rus, ``Guaranteeing
  spoof-resilient multi-robot networks,'' \emph{Autonomous Robots}, vol.~41,
  no.~6, pp. 1383--1400, 2017.

\end{thebibliography}

\begin{IEEEbiography}[{\includegraphics[width=1in,height=1.25in,clip,keepaspectratio]{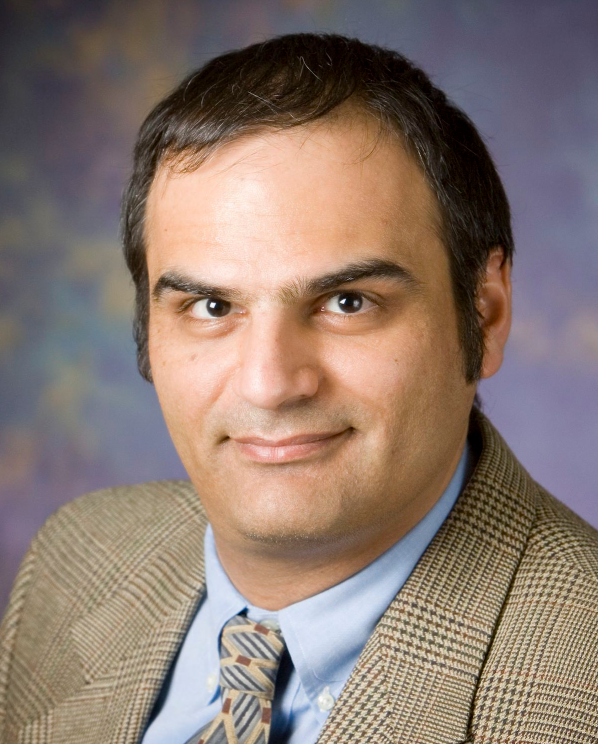}}]{Christoforos N. Hadjicostis} 
(M'99, SM'05, F'20) received the S.B. degrees in electrical engineering, computer science and engineering, and in mathematics, the M.Eng. degree in electrical engineering and computer science in 1995, and the Ph.D. degree in electrical engineering and computer science in 1999, all from the Massachusetts Institute of Technology, Cambridge. In 1999, he joined the Faculty at the University of Illinois at Urbana-Champaign, where he served as Assistant and then Associate Professor with the Department of Electrical and Computer Engineering, the Coordinated Science Laboratory, and the Information Trust Institute. Since 2007, he has been with the Department of Electrical and Computer Engineering, University of Cyprus, where he is currently Professor and Interim Director of the Daedalus Research Center. His research focuses on fault diagnosis and tolerance in distributed dynamic systems, error control coding, monitoring, diagnosis and control of large-scale discrete-event systems, and applications to network security, anomaly detection, and energy distribution systems. Dr. Hadjicostis serves as Editor in Chief of the Journal of Discrete Event Dynamic Systems and as Senior Editor of IEEE Transactions on Automatic Control. In the past, he served as Associate Editor of Automatica, IEEE Transactions on Automation Science and Engineering, IEEE Transactions on Control Systems Technology, IEEE Transactions on Circuits and Systems~I, and the Journal of Nonlinear Analysis of Hybrid Systems. 
\end{IEEEbiography}

\begin{IEEEbiography}[{\includegraphics[width=1.05in,height=1.25in,clip,keepaspectratio]{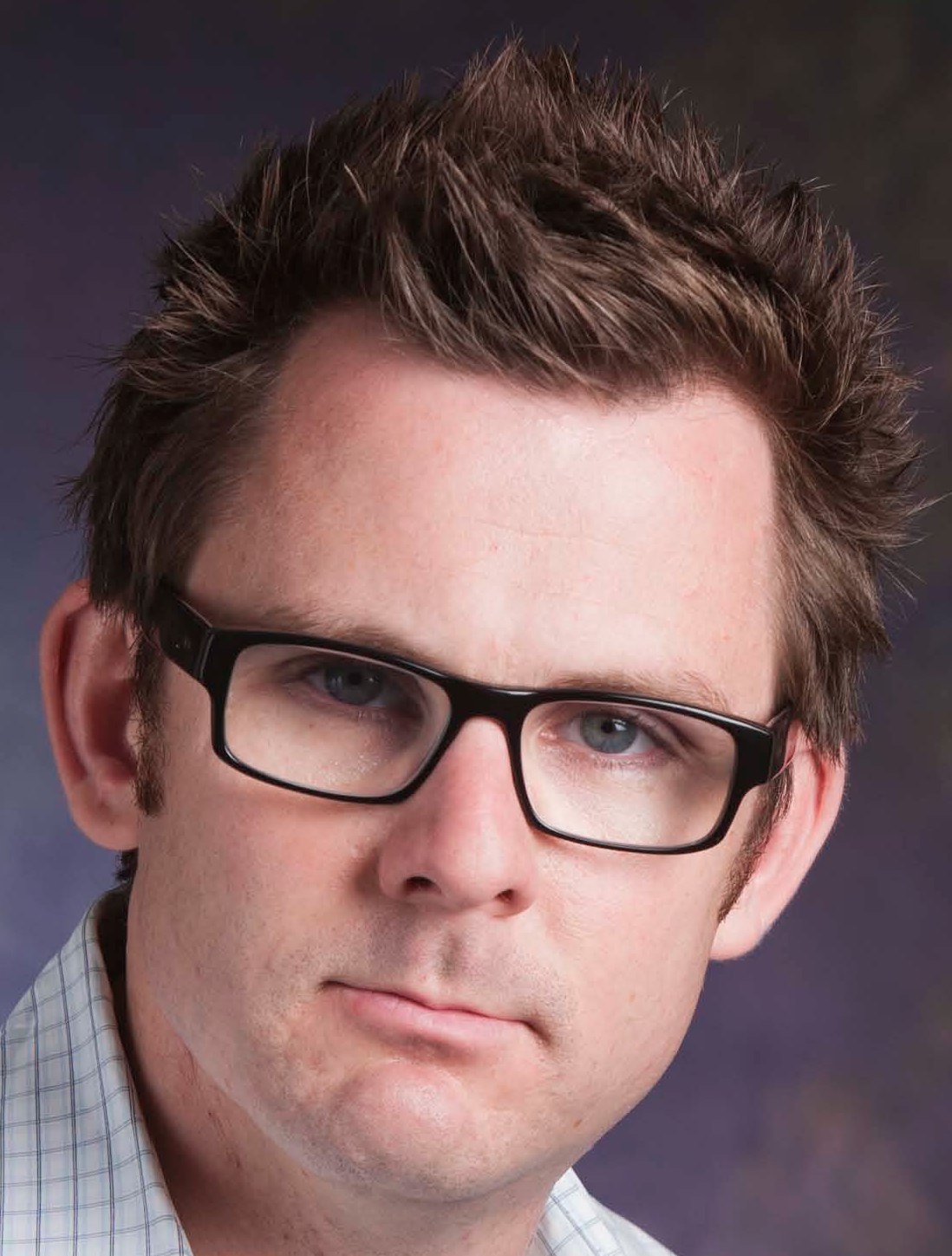}}]{Alejandro D. Dom\'{i}nguez-Garc\'{i}a} (S'02, M'07, SM'20, F'23) received the degree of Electrical Engineer from the University of Oviedo (Spain) in 2001 and the Ph.D. degree in electrical engineering and computer science from the Massachusetts Institute of Technology, Cambridge, MA, in 2007. 

He is Professor with the Department of Electrical and Computer Engineering (ECE), and Research Professor with the Coordinated Science Laboratory and the Information Trust Institute, all at the University of Illinois at Urbana-Champaign.  He is affiliated with the ECE Power and Energy Systems area, and  has been a Grainger Associate since August 2011.   

His research interests are in the areas of system reliability theory and control, and their applications to electric power systems, power electronics, and embedded electronic systems for safety-critical/fault-tolerant aircraft, aerospace, and automotive applications.

Dr. Dom\'{i}nguez-Garc\'{i}a received the NSF CAREER Award in 2010, and the Young Engineer Award from the IEEE Power and Energy Society in 2012. In 2014, he was invited by the National Academy of Engineering to attend the US Frontiers of Engineering Symposium, and was selected by the University of Illinois at Urbana-Champaign Provost to receive a Distinguished Promotion Award.  In 2015, he received the U of I College of Engineering Dean's Award for Excellence in Research. 

He is currently an editor for the IEEE Transactions on Control of Network Systems; he also served as an editor of the IEEE Transactions on Power Systems and IEEE Power Engineering Letters from 2011 to 2017.
\end{IEEEbiography}

\end{document}